\documentclass{article}
\usepackage{justin, project}

\title{Skirting Additive Error Barriers for Private Turnstile Streams}
\author{
Anders Aamand  \\
BARC, University of Copenhagen \\
\texttt{aa@di.ku.dk}
\and
Justin Y.\ Chen \\
MIT \\
\texttt{justc@mit.edu}
\and
Sandeep Silwal \\
University of Wisconsin-Madison\\
\texttt{silwal@cs.wisc.edu}
}
\date{}

\begin{document}

\maketitle

\begin{abstract}
We study differentially private continual release of the number of distinct items in a turnstile stream, where items may be both inserted and deleted. A recent work of Jain, Kalemaj, Raskhodnikova, Sivakumar, and Smith (NeurIPS '23) shows that for streams of length $T$, polynomial additive error of $\Omega(T^{1/4})$ is necessary, even without any space restrictions. We show that this additive error lower bound can be circumvented if the algorithm is allowed to output estimates with both additive \emph{and multiplicative} error. We give an algorithm for the continual release of the number of distinct elements with $\text{polylog} (T)$ multiplicative and  $\text{polylog}(T)$ additive error. We also show a qualitatively similar phenomenon for estimating the $F_2$ moment of a turnstile stream, where we can obtain $1+o(1)$ multiplicative and $\text{polylog} (T)$ additive error. Both results can be achieved using polylogarithmic space whereas prior approaches use polynomial space. In the sublinear space regime, some multiplicative error is necessary even if privacy is not a consideration. We raise several open questions aimed at better understanding trade-offs between multiplicative and additive error in private continual release.
\end{abstract}

\section{Introduction}
Differential privacy (DP) under continual release captures the setting where private data is updated over time. The data arrives one at a time in a stream, and an algorithm must privately release an underlying statistic of interest as the data changes. The continual release model goes back to the early days of DP and has been extensively studied for counting the number of events in a binary stream, also called continual counting~\cite{dwork2010differential, chan2011continual, honaker2015trees, andersson2023smooth,  fichtenberger2023constant, henzinger2023almosttight,  andersson2024continual, henzinger2024unifying,  dvijotham2024efficient, henzinger2025improved}, private machine learning applications (via DP-SGD and DP-FTRL)~\cite{kairouz2021practical, denissov2022improved, mcmahan2022federated,  choquette-choo2023amplified, choquette2023multiepoch, choquette-choo2024correlated} (also see the survey of \cite{pillutla2025correlated}), and tracking graph statistics~\cite{Song2018graph, fichtenberger2021graph, jain2024graph, raskhodnikova2024fully, aryanfard2025improved, andersson2026factorization}.
In our setting, the stream is modeled as items from a known universe $[n]$ of $T$ updates. We consider the \emph{turnstile} model where items may be both inserted and deleted. 

Continual release is algorithmically interesting since it combines challenges of both differential privacy and streaming algorithms, and there is a recent line of work aiming to nail down the optimal privacy-utility trade-offs for continual release of some of the most basic and well-studied statistics in a stream, including estimating the number of distinct elements and the $F_p$ moments of the stream~\cite{epasto2023freqmoment, jain2023price, jain2023turnstiledistinct, henzinger2024distinct, cummings2025distinct, aryanfard2025improved, andersson2026factorization, epasto2026secret}. We focus on the distinct elements and $F_2$ moment estimation problems in this paper, two of the foundational problems in streaming algorithms.

However, there remain substantial gaps in our understanding of these fundamental streaming problems when privacy is a concern. Even ignoring space considerations, which make the aforementioned problems trivial in the standard streaming setting without privacy, there exist polynomial gaps (in the stream length $T$) between known upper and lower bounds in turnstile streams. For example, for estimating the number of distinct elements\footnote{The number of elements with non-zero frequency.} in the stream, the best known algorithms achieve a $\tilde{O}(T^{1/3})$\footnote{We use $\tilde{O}(f)$ to denote $O(f \cdot \polylog(f))$.} additive error bound \cite{jain2023turnstiledistinct, cummings2025distinct}. On the flip side, it is known that any private algorithm \emph{must} incur $\Omega(T^{1/4})$ additive error \cite{jain2023turnstiledistinct}, and closing this gap is a challenging open problem. Furthermore, just from sensitivity considerations, it is easy to see that any algorithm for private $F_2$ estimation\footnote{The $F_2$ value of a stream, also known as the second frequency moment, is defined as the sum of squares of frequencies of items in the stream.} must incur $\Omega(T)$ additive error.

We are motivated by the fact that (low-space) streaming algorithms for both distinct elements and $F_2$ estimation must incur \emph{multiplicative} error~\cite{jayram2013optimal}, and thus it is natural to ask if one can go beyond the existing additive error lower bounds for continual release in turnstile streams if the algorithm is allowed to output estimates with both \emph{multiplicative and additive} error. Indeed, some evidence of why this is possible is already present in the prior work of \cite{epasto2023freqmoment} which obtains polylogarithmic additive error along with small multiplicative error for $F_p$ moment estimation, including distinct elements, albeit in the significantly easier setting of insertion-only streams where items are never deleted. Furthermore, it can be checked that the lower bound instances for $\Omega(T^{1/4})$ additive error for distinct elements \cite{jain2023turnstiledistinct} and $\Omega(T)$ additive error for $F_2$ estimation both occur when the true underlying value is itself much larger than the additive error, meaning they do not imply any hardness for obtaining constant multiplicative error.
Beyond the continual release setting, overcoming purely additive error lower bounds by introducing multiplicative error has been a feature of several recent works in DP~\cite{aamand2025breaking, ghazi2025prem, aryanfard2025improved}.

The main conceptual message of our paper is that polynomial additive errors for fundamental streaming problems can be replaced with \textbf{polylogarithmic additive errors, at the cost of some multiplicative error}. Furthermore this can often be achieved while simultaneously using \emph{small space}.

\subsection{Our Results}\label{sec:results}
We are focused on computation over data streams of length $T$ from a universe of size $n$.
We use $(a_1, s_1)$, $\ldots$, $(a_T, s_T)$ to denote a \emph{general turnstile} data stream where $a_i \in [n]$ is an element identifier and $s_i \in \{-1,0,1\}$ is the increment amount.
We call an update an insertion if $s_i = 1$ and a deletion if $s_i = -1$.
In \emph{insertion-only} streams, $s_i = 1$ and in \emph{strict turnstile} streams, at any point in time, the number of deletions to any given element can never exceed the insertions to that element (the element's frequency cannot be negative).
We are concerned with the following notion of mixed multiplicative and additive error for continual estimation of stream statistics.

\begin{definition}[Multiplicative and Additive Error for Continual Estimation]\label{def:err}
Let $Y_t \in \R$ be a function of the prefix of a stream $(a_1, s_1), \ldots, (a_t, s_t)$.
A streaming algorithm for the continual estimation problem outputs an estimate $\hat{Y}_t$ after receiving the $t$th stream update for all $t \in [T]$.
For parameters $\alpha \geq 1, \beta \geq 0$, we say the algorithm solves the problem with $(\alpha, \beta)$ error if there exist parameters $p, q \geq 1$ with $pq = \alpha$ and $r, s \geq 0$ with $r+s = \beta$ such that, for all $t \in [T]$,
\begin{equation*}
    Y_t/p - r \leq \hat{Y}_t \leq qY_t + s.
\end{equation*}
For a randomized algorithm, we say the algorithm has error $(\alpha, \beta)$ with probability $1-\gamma$ if the error bounds hold across all timesteps with probability at least $1-\gamma$ over the randomness of the algorithm.
\end{definition}

If an algorithm satisfies $(1, \beta)$ error, we say it has purely additive error. If an algorithm satisfies $(\alpha, 0)$ error, we say it has purely multiplicative error.

\begin{remark}[Interpretation of $(\alpha, \beta)$ error]
One interpretation of $(\alpha, \beta)$ error is that we get close to an $\alpha$ multiplicative approximation to the statistic $Y_t$ as long as it is above the noise floor $Y_t \gg \beta$.
\end{remark}

Our results in this paper operate in the \emph{event-level} notion of privacy for continual release. In this setting, two datasets are neighboring if they differ only in a single update $(a_i, s_i)$. This contrasts with the more stringent notion of privacy called \emph{item-level} in which two datasets differ in all updates which touch a certain domain element $a \in [n]$. The formal definition of privacy can be found in \cref{sec:prelim}.

\paragraph{Distinct Elements}

When the statistic of interest is the number of distinct (non-zero frequency) elements, we show that we can skirt the lower bound of $\Omega(T^{1/4})$ from~\cite{jain2023turnstiledistinct} if we also allow multiplicative error in our estimates. Our main result is an algorithm that solves the distinct elements problem with only polylogarithmic additive and multiplicative error.

\begin{theorem}[Informal version of~\cref{thm:min-hash,thm:distinct_elements_domain_reduction}]\label{thm:intro-distinct}
    Let $\eps, \delta<1$. There exists an $(\eps, \delta)$-DP algorithm for the continual distinct elements problem that with probability $1-1/\poly(T)$, for all points in time $t\in [T]$ outputs an estimate $\tilde D_t$ of the number of distinct elements $D_t$ with error $(\alpha,\beta)$ where $\alpha,\beta=O\p{\frac{\polylog (T,1/\delta)}{\eps}}$. The space usage of the algorithm is $\polylog(n,T)$.
\end{theorem}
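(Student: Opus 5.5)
The plan is to combine two ingredients. First, a min-hash / subsampling reduction turns distinct-element estimation into a family of counting problems at geometric scales. Second, continual counting (the binary tree mechanism) is run on each scale. The multiplicative error is absorbed by the scale grid, and the additive error comes only from tree-mechanism noise, which is $\polylog(T)/\eps$.

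Concretely, fix hash functions $h_j:[n]\to\{0,1\}$ with $\Pr[h_j(a)=1]=2^{-j}$ for levels $j=0,1,\dots,L=O(\log T)$. At level $j$ consider the substream of updates $(a_i,s_i)$ with $h_j(a_i)=1$, and let $D_t^{(j)}$ be its number of distinct elements. If $D_t\approx 2^j$, then $D_t^{(j)}$ is a constant with constant probability. Repeating $O(\log T)$ independent copies per level and taking medians boosts this to high probability. The multiplicative loss comes from this step.

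Rather than tracking $D_t^{(j)}$ exactly, at each level I track a quantity of low sensitivity. The natural choice is the indicator ``the sampled substream is nonempty'', or more robustly a count of nonzero sampled coordinates capped at a constant $C=\polylog(T)$. The key observation is that changing one update changes the sequence $D^{(j)}_t$ by at most $1$ at each time, but a single update can change the count at many timesteps. This is exactly the difficulty behind the $T^{1/4}$ lower bound. To skirt it I would use the fact, as in prior work on turnstile distinct elements, that the total variation (number of times a coordinate's frequency crosses zero) caused by one update can be large only if that element flips many times. I would therefore express the capped count as a sum of increments $\Delta_t\in\{-1,0,1\}$ and feed these to a tree mechanism. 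Neighboring streams differ in the increment sequence in a bounded number of positions as long as we work with occurrence counts per element. To control this, I would keep per-level only a $\polylog$-size sketch: a sparse-recovery structure that recovers the support when it has size at most $C$. This gives $\polylog(n,T)$ space, and the increments are derived from the recovered support. Noise from the tree mechanism is $O(\log^{1.5}T\sqrt{\log(1/\delta)}/\eps)$ per level, and composition over $O(\log^2 T)$ counters costs another $\polylog$.

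The output at time $t$ is found by locating the smallest level $j$ whose noisy count falls below a threshold $\tau=\Theta(\polylog(T)/\eps)$ and returning $2^j\tau$. Correctness has two parts. When $D_t\ll\tau$, the answer is within additive $\beta$. Otherwise, concentration of the subsampled counts gives $\alpha=\polylog(T)/\eps$, because the noise $\polylog/\eps$ forces the threshold, and hence the ratio, to be that large. A union bound over $T$ timesteps and all levels gives probability $1-1/\poly(T)$. Domain reduction (the $\polylog(n,T)$ space claim) follows by hashing $[n]$ to $\poly(T)$ with few collisions and using limited-independence hashes.

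The main obstacle is the privacy step: bounding, under event-level neighbors in a turnstile stream, how much the per-level increment sequence can change. A single deletion can toggle an element's presence at arbitrarily many later times. My first attempt would be to note that in any one level the element is sampled with probability $2^{-j}$, and the capped count changes only while the support is small. I would then apply a sensitivity argument bounding the $\ell_1$ difference of increment streams by $O(1)$ per flip, and handle flip-heavy elements by a separate private counter of flips, as in the $T^{1/3}$ algorithms. Because the result tolerates multiplicative error, elements with many flips can be charged against the scale.
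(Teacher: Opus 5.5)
There is a genuine gap, and it sits exactly at the step you flag as the main obstacle. Every per-level statistic you propose is a nonlinear function of the frequency vector: the nonemptiness indicator, the capped support size, and the support recovered by a sparse-recovery sketch. For each of them, the increment sequence can differ between event-level neighbors at $\Theta(T)$ positions. For example, suppose $a$ alternates between frequency $0$ and $1$ throughout the stream. Replacing one early update by an extra insertion of $a$ makes $a$ present at every later time. Then the presence of $a$, and any capped count of a sampled support containing $a$, changes at every flip. Subsampling gives no amplification either: at level $j=0$ every element is sampled, and the hash is fixed state. A tree mechanism fed these increments therefore needs noise calibrated to flippancy or occurrency. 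That is precisely what produces the polynomial $\sqrt{w}$, $\sqrt{v}$, $T^{1/3}$ bounds of the prior work you invoke. Your fix of charging flip-heavy elements against the scale has no basis. The number of flips of one element is unrelated to $D_t$ (it can be $\Theta(T)$ while $D_t=O(1)$), so multiplicative slack cannot pay for it.

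The missing idea is to never privatize a distinct count at any level, and instead run continual counters on \emph{linear} statistics. The paper buckets elements by $\mathrm{lsb}(h(a))$ and counts the total frequency in each bucket, $f_t[k]=\sum_{i\le t}\mathbf{1}[\mathrm{lsb}(h(a_i))=k]\, s_i$. An event-level change alters the increment of at most two counters at a single timestep. So each counter is an ordinary continual-counting instance with error $\tau=O(\log^{1.5}T/\sqrt{\rho})$, with no dependence on flips. The link back to distinct elements is the strict turnstile assumption. Every present element has frequency at least $1$, so a bucket holding more than $2\tau$ elements has noisy count above $\tau$. An empty bucket has true count $0$ and hence noisy count at most $\tau$. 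Reporting $2^\ell$ for the largest $\ell$ whose noisy count exceeds $\tau$ is then correct up to an $O(\tau)$ factor. The lower side follows from Chebyshev at level $\lfloor\log(D_t/6\tau)\rfloor$. The upper side follows because all buckets at index $\lceil\log(4D_t)\rceil$ and above are empty with probability at least $3/4$. A median over $O(\log T)$ copies gives high probability. The $O(\tau)$ multiplicative loss is the price of not distinguishing one heavy element from $\tau$ light ones. Note that this polylog-space guarantee is for strict turnstile streams, an assumption your proposal never uses. The paper's general-turnstile variant replaces nonnegativity by random signs and Littlewood--Offord anti-concentration on bucket sums, at the cost of $\poly(T)$ space.
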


Recall that the best bound from prior work incurs error $(1, \tilde{O}\p{T^{1/3}})$, ignoring dependencies on $\eps$ and $\delta$.
Our bound improves upon this as long as the true number of distinct elements is upper bounded by $\tilde{O}(T^{1/3})$.
As $n$ is a trivial upper bound for the additive error of any algorithm (the number of distinct elements is between $0$ and $n$), the prior work gives non-trivial estimates when the stream length is upper bounded as $T \ll n^3$.
By comparison, our bounds give non-trivial estimates for $T$ which grows superpolynomially in $n$.
In settings where our bound and the prior bound alternate in quality over the course of the stream, the best of the two bounds (up to constant factors) can be achieved by simply running both in parallel each with half of the privacy budget and outputting the estimate from prior work when it exceeds twice its additive error.

Our results are achieved by two different algorithms for continual distinct elements estimation. As a core primitive, both rely on differentially private continual counting. Our contribution is using tools from the streaming literature to carefully transform distinct elements estimation into certain counting problems which are amenable to the additive error guarantees of private continual counting.

The first algorithm (which leads to \cref{thm:min-hash}) is inspired by the classic idea of using the minimum hash value of keys in a set $A$ to estimate the size of $A$ and appears in~\cref{sec:minhash}. Under privacy constraints, we cannot compute the minimum hash value exactly. Instead, we create buckets based on the least significant non-zero bit of hashes of keys such that the expected number of keys hashing to the buckets increase geometrically. We can then use private continual counting in each bucket to approximately determine the min hash.

The second algorithm (which yields \cref{thm:distinct_elements_domain_reduction}) is based on performing a domain reduction, also via a hash function, to a domain sufficiently small such that many elements collide. These collisions can be detected by a private continual counting algorithm. We can then use the size of the reduced domain as an estimate for the number of distinct elements. This algorithm is presented in~\cref{sec:domain-reduction}.

The minimum hash algorithm is limited to strict turnstile streams but achieves better error and less space usage than the domain reduction algorithm which applies to general turnstile streams.
In \cref{sec:domain-reduction}, we also show a potential path towards achieving arbitrarily good multiplicative error.
Using the same tools as the domain reduction algorithm, we show the following reduction: a (hypothetical) algorithm achieving purely additive error sublinear in the domain size $n$ implies the existence of an algorithm achieving $(1+\eta, \polylog(T))$ error. See \cref{thm:reduction} for details.

\paragraph{Frequency Moments}

In addition to the distinct elements problem, we also consider the $F_2$ estimation problem, where we are interested in approximating the second moment of the frequencies of elements in the stream at any point of time. Specifically, if each update $(a_t,s_t)$ satisfies $a_t\in [n]$ and $s_t\in \{-1,0,1\}$, we may define $x_t[j]=\sum_{i\leq t: a_i=j}s_j$, and we are interested in approximating $F_2=\sum_{j\in [n]}x_t[j]^2$ for any $t$. It is easy to see that any algorithm with only purely additive error $(\alpha,\beta)=(1,\beta)$, must have $\beta=\Omega(T)$, simply because the sensitivity of the second moment is $\Omega(T)$. Surprisingly, we show that allowing a small constant multiplicative error, the additive error can be made polylogarithmic.
Furthermore, our algorithm quantitatively improves upon the prior work of~\cite{epasto2023freqmoment} which only applies in the insertion-only model.

\begin{theorem}[Informal version of~\cref{thm:F2-estimation}]\label{thm:intro-F2}
    Let $\eps,\delta,\eta<1$. There exists an $(\eps, \delta)$-DP algorithm for the $F_2$ estimation problem that with probability $1-1/\poly(T)$, for all points in time $t\in [T]$ outputs an estimate of $\tilde F_2$ of $F_2$ with error $(1+\eta,\beta)$ where $\beta=\polylog(T,\eta,\delta)/(\eps^2\eta^3)$. The space usage of the algorithm is $\polylog(T)/\eta^2$.
\end{theorem}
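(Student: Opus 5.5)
The plan is a linear sketch of AMS / CountSketch type, whose coordinates are each maintained by private continual counting. Fix $k = O(\log T/\eta^2)$ repetitions, each consisting of a row of $m = O(1/\eta^2)$ buckets.

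Each row uses a pairwise independent bucket hash $h:[n]\to[m]$ and a 4-wise independent sign hash $\sigma:[n]\to\{\pm1\}$. Bucket $b$ maintains the counter $c_b = \sum_{j: h(j)=b}\sigma(j)x_t[j]$.

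Each update $(a_t,s_t)$ changes exactly one counter per row, by $\sigma(a_t)s_t\in\{-1,0,1\}$. Each counter is therefore a continual counting stream with $\pm1$ increments. Over all $km$ counters, one update touches only $k$ of them, each by at most $1$.

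I will run a private continual counting mechanism (binary tree with Gaussian noise) on every counter. Take per-counter noise scale $\sigma_{\mathrm{cnt}} = O(\sqrt{k}\,\polylog(T)\sqrt{\log(1/\delta)}/\eps)$, so that composition over the $k$ touched counters gives $(\eps,\delta)$-DP. A union bound then gives, with probability $1-1/\poly(T)$, that every released counter satisfies $\hat c_b = c_b + e_b$ with $|e_b|\le E := \polylog(T,1/\delta)/\eps$ at all times. This covers both the $k$ factor and the tree depth. Space is $O(km\,\polylog T) = \polylog(T)/\eta^2$, which is needed since the tree mechanism can be run in $O(\log T)$ space per counter.

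The estimator is the median over rows of the row sum $Z=\sum_b \hat c_b^2$. Expanding,
\[
Z=\sum_b c_b^2 + 2\sum_b c_b e_b + \sum_b e_b^2 .
\]
The first sum is the standard CountSketch/AMS $F_2$ estimator. With $m=O(1/\eta^2)$ it lies within $(1\pm\eta)F_2$ with constant probability per row. Taking the median over $k=O(\log T)$ rows makes it hold at a fixed time with probability $1-1/\poly(T)$, and a union bound over the $T$ times makes it hold at all times; the hashes are fixed, so no adaptivity issue arises.

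The noise terms carry the real difficulty, and the cross term is the one I expect to be the main obstacle. The last term is at most $mE^2$. For the cross term, Cauchy–Schwarz gives
\[
2\sum_b |c_b||e_b| \le 2\sqrt{\textstyle\sum_b c_b^2}\,\sqrt{m}E .
\]
Then AM-GM gives $2\sqrt{A}\sqrt{m}E \le \eta A + mE^2/\eta$, where $A=\sum_b c_b^2\le(1+\eta)F_2$. So
\[
Z \in (1\pm O(\eta))F_2 \pm O(mE^2/\eta).
\]
Substituting $m=1/\eta^2$ and $E^2=\polylog/\eps^2$ gives additive error $\polylog(T,1/\delta)/(\eps^2\eta^3)$, matching the claimed $\beta$. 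Rescaling $\eta$ by a constant finishes the proof.

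The delicate points are these. The failure probability must hold jointly over all counters and times, and the Gaussian tails of the tree noise make this work with polylog overhead. The sensitivity accounting must be done correctly: $k$ touched counters with $\ell_2$ sensitivity $\sqrt{k}$ across the concatenated trees. Treating all trees as one Gaussian mechanism with $\ell_2$ sensitivity $\sqrt{k\log T}$ is cleaner than composition and gives the same polylog bound.
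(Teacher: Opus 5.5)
Your argument is correct, but it uses a different sketch from the paper.

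\textbf{The paper's route.} The paper takes a dense Rademacher JL matrix with $m=O(\log T/\eta^2)$ rows and entries $\pm 1/\sqrt{m}$. Each update therefore moves \emph{every} coordinate of $Ax^{(t)}$ by $\pm 1/\sqrt{m}$, so the sketch has $\ell_2$-sensitivity $1$. Each of the $m$ counters is run at $\rho/m$-zCDP and, after undoing the $\sqrt{m}$ rescaling, has error $\lambda=O(\log^{1.5}T/\sqrt{\rho})$. The JL tail bound then makes the single estimator $\sum_i(\hat y^i_t)^2$ accurate with probability $1-1/\mathrm{poly}(T)$, with no median.

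\textbf{Your route.} You take $k=O(\log T)$ CountSketch rows of $O(1/\eta^2)$ buckets. Each update hits one bucket per row with unit weight, so you pay a $\sqrt{k}$ blow-up in per-counter noise, $E=O(\log^2 T/\sqrt{\rho})$. You recover high probability from Chebyshev plus a median over rows.

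\textbf{Comparison.} The bookkeeping comes out identical: additive error $mE^2/\eta=O(\log^4T/(\eta^3\rho))$ and $O(\log^2T/\eta^2)$ words, exactly the paper's $O(\lambda^2 m/\alpha)$.
\begin{itemize}
\item Your route needs weaker randomness: pairwise-independent bucket hashes and 4-wise independent signs, $O(1)$ words per row. The paper needs $O(\log T)$-wise independent matrix entries.
\item The paper's route gives a single estimator with no median, and sensitivity $1$ without extra accounting.
\item Your Cauchy--Schwarz/AM--GM bound on the cross term is a cleaner substitute for the paper's split into coordinates with $|y^i_t|\le\lambda/\alpha_0$ and $|y^i_t|>\lambda/\alpha_0$. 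It would apply verbatim to the paper's estimator.
\end{itemize}

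\textbf{Things to fix.}
\begin{itemize}
\item You first set $k=O(\log T/\eta^2)$, but everything afterwards needs $k=O(\log T)$: the median over $O(\log T)$ rows, space $\mathrm{polylog}(T)/\eta^2$, and an $\eta$-free $E$. With the larger $k$, the $\sqrt{k}$ noise inflation would give $\eta^{-5}$ additive error and $\eta^{-4}$ space.
\item State the median step explicitly. With high probability a majority of rows have $\sum_b c_b^2\in(1\pm\eta)F_2$. Your deterministic noise bound puts each such row's $Z_r$ in $[(1-\eta)^2F_2-O(mE^2/\eta),\,(1+\eta)^2F_2+O(mE^2/\eta)]$. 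So the median of the $Z_r$ lies there as well. Note that this median is not a function of the median of the noiseless row sums, which is why the step needs this form.
\item Neighboring streams can differ in two buckets per row, so $2k$ counters are affected rather than $k$. This only changes constants.
\end{itemize}
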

Again, this result relies on continual counting. We use the Johnson-Lindenstrauss reduction to map the $n$-dimensional frequency vector to a small domain, and use continual counting to estimate the coordinates in the reduced domain.

% See \cref{sec:prior-work} for a detailed discussion of prior works on these problems.

% It remains an open problem to improve the tradeoff (parametrized by $\eta$) between the additive and multiplicative error in the above theorem.
% More generally, while the results above lead to several open questions for future work (to be discussed shortly), they demonstrate the point that relaxing the basic problems of $F_0$ and $F_2$ estimation by allowing both additive and multiplicative error, one can circumvent the polynomial lower bounds. It is interesting to investigate if a similar phenomenon appears in other fundamental streaming problems, e.g., for the basic problem of counting triangles in a graph.

\renewcommand{\arraystretch}{1.4}
\begin{table}[t!]
    \centering
    % \footnotesize
    \begin{tabular}{|c|c|c|c|c|}
        \hline 
         Source & Error & Space & Privacy & Notes\\
         \hline 
         \cite{jain2023price} &
         $(1, \tilde{O}(T^{1/3}))$ &
         $O(T)$ &
         Item-level &
         -----
         \\
         \hline 
         \cite{cummings2025distinct} &
         $(1+\eta, \tilde{O}(T^{1/3}))$ &
         $\tilde{O}_{\eta}(T^{1/3})$ &
         Event-level &
         -----
         \\
         \hline
         \cite{jain2023turnstiledistinct} &
         $(1,\tilde{\Omega}(T^{1/4}))$ &
         ----- &
         Event-level &
         Lower bound   
         \\
         \hline
         \cite{epasto2023freqmoment} &
         $(1+\eta, O_{\eta}(\log^2(T)))$ &
         $\polylog(T)$ &
         Event-level &
         Insertion-only   
         \\
         \hline
         \cref{thm:min-hash} &
         $(O(\log^2(T)),O(\log^2(T)))$ &
         $O(\log^3(T))$ &
         Event-level &
         Strict turnstile 
         \\
         \hline
         \cref{thm:distinct_elements_domain_reduction} &
         $(O(\log^{10}(T)),O(\log^{10}(T)))$ &
         $\poly(T)$ &
         Event-level &
         ----- \\
         \hline
    \end{tabular}
    \caption{Multiplicative and additive error bounds for $(\eps, \delta)$-DP algorithms for the Continual Distinct Elements Problem. We ignore dependencies on the privacy parameters.
    Unless otherwise stated, upper bounds hold for general turnstile streams and lower bounds hold for strict turnstile streams.
    We present results which hold for worst-case streams of length $T$. See \cref{sec:prior-work} for numerous prior works which give instance-specific bounds depending on stream statistics.}
    \label{tab:results-distinct}
\end{table}

\begin{table}[t!]
    \centering
    % \footnotesize
    \begin{tabular}{|c|c|c|c|c|}
        \hline 
         Source & Error & Space & Privacy & Notes\\
         \hline 
         \cite{epasto2023freqmoment} &
         $(1+\eta, \tilde{O}_{\eta}(\log^7(T)))$ &
         $O_\eta(\log^2(T))$ &
         Event-level &
         Insertion-only
         \\
         \hline 
         \cref{lem:f2-lowerbound} &
         $(1, \Omega(T))$ &
         ----- &
         Event-level &
         Lower bound
         \\
         \hline 
         \cref{thm:F2-estimation} &
         $(1+\eta, \tilde{O}_{\eta}(\log^4(T)))$ &
         $O_\eta(\log^2(T))$ &
         Event-level &
         -----
         \\
         \hline
    \end{tabular}
    \caption{Multiplicative and additive error bounds for $(\eps, \delta)$-DP algorithms for the Continual $F_2$ Problem. We ignore dependencies on the privacy parameters. Unless otherwise stated, upper bounds hold for general turnstile streams and lower bounds hold for strict turnstile streams.}
    \label{tab:results-f2}
\end{table}

\subsection{Prior Work}\label{sec:prior-work}

In this section, we describe existing results on $(\eps, \delta)$-DP continual distinct elements and $F_2$ estimation.
Beyond directly related prior work, there are several works which investigate other streaming problems (not necessarily with continual release) achieving small space along with privacy guarantees~\cite{smith2020flajolet, zhao2022sketches, pagh2022utility, lebeda2024misra, qiu2025differential}.
For simplicity, we ignore dependencies on the privacy parameters in the error bounds in following presentation. 

\paragraph{Item-Level Privacy}

The recomputation technique of \cite{jain2023price} can be used to get an algorithm for turnstile streams which achieves error $(1, \tilde{O}(T^{1/3}))$ by recomputing a private estimate of the number of distinct elements every $T^{1/3}$ timesteps and outputting the most recent estimate at every $t \in [T]$.
This algorithm uses space which is linear in either $T$ or $n$ as the exact distinct element count must be maintained throughout the stream.\footnote{It is tempting to make a sublinear space version of \cite{jain2023price} by applying the recomputation technique to a multiplicative approximation to the number of distinct elements at any time using a standard streaming algorithm. However, this idea does not straightforwardly work as such approximation algorithms blow up the sensitivity.}

\cite{jain2023turnstiledistinct} initiated the specific study of the turnstile continual distinct elements problem.
This and several follow-up works do not generally improve upon $T^{1/3}$ additive error but introduce error bounds parameterized by statistics of the stream instance.
Therefore, for certain instances, these bounds may improve upon the worst-case.
This work gives an algorithm achieving additive error $\tilde{O}(\sqrt{w})$ where $w$ is the \emph{flippancy} of the stream: the maximum number of times any one element switches between zero frequency and positive frequency.
This algorithm requires $O(T)$ space in order to estimate the flippancy.
The authors show that $\tilde{\Omega}(\max\crb{\sqrt{w}, T^{1/3}})$ purely additive error is required.
\cite{henzinger2024distinct} consider a quantity $K$ which is the \emph{total flippancy}: the total number of times any element switches between zero and positive frequency. They show that $\tilde{\Theta}(K^{1/3})$ purely additive error is achievable and required.

\paragraph{Event-Level Privacy}

Event-level privacy is strictly weaker than item-level privacy. Therefore, algorithms for item-level privacy also apply to event-level privacy, while lower bounds for event-level privacy imply lower bounds for item-level privacy.

\cite{jain2023turnstiledistinct} show that for event-level privacy, a weaker lower bound $\tilde{\Omega}(\max\crb{\sqrt{w}, T^{1/4}})$ purely additive error is required. This leaves open an interesting gap between the upper bound of $T^{1/3}$ (which is achievable with item-level privacy) and lower bound of $T^{1/4}$.
\cite{henzinger2024distinct} show that even for event-level privacy, $\tilde{\Theta}(K^{1/3})$ is the best possible dependence on the total flippancy for purely additive error.

\cite{cummings2025distinct} give an algorithm achieving $\p{(1 +\eta, \tilde{O}_\eta(\sqrt{v})}$ error where $v$ is the occurrency: the maximum number of times any given element is updated in the stream. While $v > w$, their algorithm has the benefit of using space $\tilde{O}_\eta(T^{1/3})$.
While this work allows for multiplicative error, their goal was simply to achieve the existing bound of $T^{1/3}$ with low space (which necessitates some multiplicative error).
In this work, we show that allowing (more) multiplicative error allows for significantly better additive error, and we achieve polylogarithmic as opposed to polynomial space.

To our knowledge, there has been no prior work specifically studying continual estimation of other frequency moments such as $F_2$ in \emph{turnstile} streams.

\paragraph{Insertion-Only Streams and the ``Likes'' Model}

When deletions are not allowed (or limited), the problem of continual distinct elements becomes significantly easier.

\cite{epasto2023freqmoment} study the problem of continual release of frequency moments (including the number of distinct elements, which is equivalent to the zeroth frequency moment) in the \emph{insertion-only} model where stream updates must be increments with $s_t = 1$. They give an $(\eps, 0)$-DP algorithm achieving $\p{(1 +\eta, O_{\eps, \eta}(\log^2(T)))}$ error algorithm using space $\poly(\log T, 1/\eta, 1/\eps)$.
The authors also study $F_2$ estimation in insertion-only streams. They achieve $\p{1+\eta, \tilde{O}_\eta(\log^7(T))}$ error using space $O(\log^2(T)/\eta^2)$.

\cite{henzinger2024distinct} consider the ``likes'' model in which elements may only switch from having frequency zero or one. This is an intermediate model between insertion-only and strict turnstile streams. In this setting, $\polylog(T)$ additive error is necessary and achievable.

\paragraph{Concurrent Works of \cite{aryanfard2025improved, andersson2026factorization, epasto2026secret}}

Three concurrent works study various related aspects of private continual estimation.
There is no direct overlap with our results, though combined, their and our results significantly expand the landscape of private turnstile estimation.

The work of \cite{aryanfard2025improved} studies mixed multiplicative and additive error in continual settings with a focus on graph problems. They show that hardness for purely additive error for several fundamental graph problems in the continual setting hold for \emph{insertion-only} as well as turnstile streams. On the other hand, they show improved bounds for several insertion-only problems with mixed multiplicative and additive error including graph problems and simultaneous norm estimation. They do not provide any new upper bounds for turnstile streams, which is the focus of our work. Finally, they show that for continual estimation of some graph statistics in the \emph{item-level} privacy setting, the product of multiplicative and additive errors must be large. Their lower bounds extend to distinct elements (via their results on 1-Way-Marginals combined with the reduction of \cite{jain2023turnstiledistinct}): the product of multiplicative and additive errors $\alpha \beta$ must be essentially $T^{1/3}$.\footnote{We thank the authors of \cite{aryanfard2025improved} for pointing out that their lower bounds can be extended to the distinct elements problem.} This means that an analogue of our upper bounds, which hold for event-level DP, cannot hold for item-level DP.

The work of \cite{andersson2026factorization} improves constant factors in the additive error for continual release of distinct elements, degree counts, and triangle counts. Their main contribution is to adapt matrix factorization techniques which improve constant factors for continual counting (e.g., \cite{henzinger2023almosttight}) to replace the binary tree mechanism in existing approaches such as in \cite{jain2023turnstiledistinct}. In comparison, our results give asymptotically better additive error at the price of some multiplicative error.

The work of \cite{epasto2026secret} proves space lower bounds against algorithms achieving results similar to those of \cite{jain2023turnstiledistinct, cummings2025distinct} in the item-level setting. In particular, they show that any algorithm achieving error $(1.1, o(T^{1/3}))$ (where the improvement over $T^{1/3}$ error may be due to instance-specific occurrency statistics as in \cite{cummings2025distinct}) must use space close to $T^{1/3}$. We show that it is possible in a different setting to simultaneously achieve small space and small additive error, specifically, in the event-level privacy setting when the multiplicative error is significantly more than $1.1$.\footnote{From discussions with the authors of \cite{epasto2026secret}, it seems likely that their lower bound can be extended to large multiplicative error.}

\subsection{Open Problems}\label{sec:open_problems}
We discuss several open problems exploring the trade-offs between multiplicative and additive errors in private continual estimation settings. 

\paragraph{Better dependence on $n$ and $T$ for counting distinct elements.} Perhaps the main open question of our paper is whether one can obtain better bounds for counting distinct elements under differential privacy with purely additive error. Without any assumptions on the input stream, there is a polynomial gap between the upper bound of $\tilde O(T^{1/3})$ and the lower bound of $\Omega(T^{1/4})$. Moreover, the lower bound holds only with some restrictions on the values of $n$ and $T$, namely under the assumption that $T\leq n^2$. For general values of $n$ and $T$, the lower bound is $\Omega(\min(T^{1/4},n^{1/2}))$ which is never stronger than a lower bound of $n^{1/2}$. Note that when counting distinct elements over a domain of size $n$, it is trivial to obtain error $n$. An interesting question towards resolving the optimal dependence on the parameters in the additive error is whether there exists an algorithm with error $o(n)$ when $T$ is a large polynomial in $n$. This is especially interesting in light of our following result.

\begin{theorem}[Informal version of \cref{thm:reduction}]
Any $(\eps, \delta)$-DP algorithm for the number of distinct elements in the continual release setting with purely additive error $n^{0.99}$ can be converted to another differentially private algorithm with $(1+\eta)$-multiplicative and $\poly(\log T, 1/\eta, 1/\eps, \log(1/\delta))$ additive error for any constant $\eta > 0$.
\end{theorem}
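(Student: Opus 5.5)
The plan is a reduction by hashing the universe into a small domain whose size is tuned to the current number of distinct elements. Suppose $\mathcal{A}$ is a continual distinct-elements algorithm over universe $[m]$ with purely additive error $m^{0.99}$; we may assume it works for every domain size $m$. For each scale $j = 0,1,\dots,\log T$ and a small constant $c=c(\eta)$, we pick a hash function $h_j:[n]\to[m_j]$ with $m_j = 2^j/c$. We then feed the stream $(h_j(a_t), s_t)$ into a copy $\mathcal{A}_j$ over domain $[m_j]$.

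Each copy is private, since neighboring streams remain neighboring after hashing: a single update maps to a single update. We split the privacy budget across the $O(\log T)$ copies. Simple composition costs a $\log T$ factor in $\eps$; advanced composition would give $\sqrt{\log T}$.

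To see what each copy estimates, let $D_t$ be the number of distinct elements. The hashed stream is turnstile over $[m_j]$, but a bucket can have frequency zero even when it contains nonzero elements, because their frequencies may cancel. To handle this, we first hash with random signs, or better, with random positive weights modulo a prime, so that a nonzero bucket is zero only with small probability. Concretely, each element $a$ gets a random multiplier $w_a \in [P]$ and the bucket frequency is computed mod $P$. This needs $\mathcal{A}$ to accept weighted updates; alternatively we replace an update by $w_a$ unit updates, which blows up sensitivity. A simpler choice is to have $\mathcal{A}$ count on coordinates where each item occurs with a random $\pm 1$ sign, and accept a constant-probability cancellation loss that is corrected by averaging independent repetitions.

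Ignoring cancellations, the number of occupied buckets is a balls-into-bins quantity. Its expectation is $m_j(1-(1-1/m_j)^{D_t})$, and it concentrates whenever this is large, using limited independence plus Chebyshev. From this we can solve for $D_t$ with multiplicative error $1+\eta$ as long as $D_t \le C m_j$ for a constant $C$, since the map is invertible with bounded derivative in that range. It is also accurate up to $1+\eta$ when $D_t \gtrsim m_j$ relative to $m_j$'s noise.

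Noise enters as follows. $\mathcal{A}_j$ gives the occupied count up to additive $m_j^{0.99}\cdot\polylog/\eps$, which is $\le \eta m_j$ once $m_j \ge (\polylog(T)/(\eps\eta))^{100}$. Thus at the scale $j$ with $m_j = \Theta(D_t)$ we get relative error $O(\eta)$, provided $D_t$ exceeds a $\poly(\log T,1/\eta,1/\eps)$ threshold. Below that threshold, outputting $0$ or a small-scale estimate incurs only additive error, which is where the $\poly(\log T, 1/\eta,1/\eps,\log(1/\delta))$ term comes from.

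Scale selection works by picking, at each time, the smallest $j$ whose estimated occupancy is at most $(1-e^{-C})m_j$ minus the noise slack, that is, not saturated. All scales are run simultaneously and the selection is post-processing, so privacy is unaffected. A union bound over all $t$ and $j$ gives failure probability $1/\poly(T)$. Concentration at each fixed time, plus a union bound over $T$ times, requires $\log T$-wise independence or median-of-repetitions boosting.

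The main obstacle I expect is the cancellation issue for general turnstile streams: elements with nonzero frequency can hash together and sum to zero. My first attempt will be random $\pm1$ signs with $O(\log T)$ independent repetitions per scale, taking medians. A bucket containing $k\ge1$ nonzero items has bucket sum zero with probability at most $1/2$, so the expected occupancy is a known monotone function of $D_t$ that lies between $\tfrac12$ and $1$ times the unsigned one. Calibrating the inversion against this distorted function costs only constants, but we need $1+\eta$. To recover $1+\eta$, I would choose $m_j \ge D_t/c$ for tiny $c$, so that most nonzero buckets are singletons, for which cancellation is impossible. The collision fraction is then $O(c)=O(\eta)$, which is what yields the $(1+\eta)$ multiplicative guarantee.
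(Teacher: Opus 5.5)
Your core argument matches the paper's. Hashing $[n]\to[m]$ keeps neighboring streams neighboring, and you run $\mathcal{A}$ on every hashed stream with the budget split across the copies (scales times repetitions gives $O(\log^2 T)$ copies, not $O(\log T)$). You then read off the number of nonzero buckets at a scale with load $D_t/m=O(\eta)$. There, all but an $O(\eta)$ fraction of support elements sit alone in their bucket, and singletons cannot cancel. This is exactly \cref{lem:domain_reduction3}, so random signs are not needed for the readout. Finally, the $m^{0.99}$ noise is absorbed once $D_t\ge\poly(\log T,1/\eta,1/\eps)$.

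The genuine difference is how the scale is chosen. The paper runs a second private algorithm $\mathcal{B}$ in parallel: the MinHash estimator of \cref{thm:min-hash}, a $\polylog(T)$-approximation. It sets $m_t'$ to $\Theta(\polylog(T)/(\sqrt{\rho}\,\eta))$ times $\mathcal{B}$'s coarse estimate. The extra polylog factor in $m_t'$ is harmless under an $m^{0.99}$ error bound, and the only property of hashed streams it ever uses is the Markov bound of \cref{lem:domain_reduction3}. You instead self-calibrate from $\mathcal{A}$'s own outputs via a saturation test. This needs no auxiliary estimator, and is more natural in the general turnstile model, since MinHash is only analyzed for strict turnstile streams. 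But it forces you to analyze occupancy at \emph{high} load, which the paper never touches.

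That high-load step is where your write-up needs repair. To conclude that the smallest unsaturated scale has constant load, heavily loaded scales must look saturated. Your bound that a nonempty signed bucket cancels with probability at most $1/2$ only gives occupancy at least $\frac{1}{2}(1-e^{-\lambda})m_j<m_j/2$. That never exceeds your threshold $(1-e^{-C})m_j$ once $C\ge\ln 2$, so overloaded scales could pass the test. There are two fixes:
\begin{itemize}
\item Use the sharp Erd\H{o}s--Littlewood--Offord bound $\binom{k}{\lfloor k/2\rfloor}2^{-k}=O(k^{-1/2})$, which gives expected occupancy fraction $1-O(\lambda^{-1/2})$.
\item Put the threshold below $1/2$, e.g.\ $m_j/4$. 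Then load at least $1$ fails and load at most $1/8$ passes, up to noise slack, for $m_j$ above the noise floor.
\end{itemize}
Also, the signed expected occupancy is not a known function of $D_t$; it depends on the multiset $\{|x_j|\}$. For example, $|x_a|=|x_b|=1$ cancels with probability $1/2$, while $|x_a|=1,|x_b|=2$ never cancels. So inverting a balls-into-bins formula is unavailable in the general model. Next, the selected scale has constant load rather than load $\le c$, so you must read off $O(\log(1/\eta))$ scales higher. The noise condition then becomes $m^{0.99}\polylog(T)/\eps\le\eta D_t\approx\eta c\,m$, which is still a polynomial threshold. In strict turnstile streams, drop the signs: the hashed stream then stays strict turnstile, as $\mathcal{A}$ may require, and nothing cancels at all. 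With these adjustments your route goes through.
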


Alternatively, this reduction could be used as a path towards proving lower bounds against $o(n)$ purely additive error.

\paragraph{Constant multiplicative approximation to distinct elements with small additive error.} Our results show that we can avoid the large additive error of the lower bounds, if we allow for a multiplicative error of $\polylog(T)$ for counting distinct. It is an interesting question whether there exists an algorithm with \emph{constant} multiplicative error and small (e.g., polylogarithmic) additive error. Our techniques based on continual counting seems to reach a natural barrier here, which arises from the fact that the counters cannot distinguish between whether the count of a bucket comes from a single highly frequent element, or from many infrequent elements, each having frequency 1, say.

\paragraph{Tradeoff between multiplicative and additive error.}
Taking a step further, what is the correct tradeoff between multiplicative and additive error for counting distinct elements? If a distinct elements algorithm with $(1+\eta)$ multiplicative approximation and small additive error exists, what is the dependence on $\eta$ in the additive error? A similar question can be asked for $F_2$ estimation, where our current algorithm has $(1+\eta)$ multiplicative approximation and $\tilde O (1/\eta^3)$ additive error. More broadly, it is interesting to explore tradeoffs between multiplicative and additive error in other settings for private continual estimation. A potential direction is to adapt the lower bound techniques in the concurrent work of \cite{aryanfard2025improved}.

\paragraph{Triangle Counting.} Recently,~\cite{raskhodnikova2024fully} considered differentially private triangle counting in dynamic graphs. Their paper contains several results, but most relevant to our setting is an algorithm for counting triangles in a graph with bounded additive error depending on $T$ and the number of vertices of the graph. The algorithm and analysis of this problem mirror continual distinct elements estimation, and it is interesting to see whether we can obtain better additive error guarantees if we are allowed a small multiplicative error.

\section{Preliminaries and Notation}\label{sec:prelim}

Throughout the paper, $\poly(T)$ refers to a polynomial of arbitrarily large constant degree. 
We consider the base $2$ logarithm by default: $\log(\cdot) = \log_2(\cdot)$.

\subsection{Data Streams}\label{sec:data_streams}

Let $(a_1, s_1), \ldots, (a_T, s_T)$ denote the data stream where $a_i \in [n]$ is an element identifier and $s_i \in \{-1,0,1\}$ is the increment amount.
Let $x_t \in \R^n$ denote the frequency vector at time step $t$. For $i \in [n]$, $x_t[i]$ is the sum of increments to item $i$ across all timesteps up to $t$. In the \emph{strict turnstile} model, we have $x_t[i] \geq 0$ for all $i \in [n], t \in [T]$, and the number of distinct elements at time $t$ is defined as $D_t = \abs{i \in [n]: x_t[i] > 0}$. In the \emph{general turnstile} model, the true frequency vector $x_t$ is allowed to have negative entries and the number of distinct elements is simply the number of non-zero coordinates. We interchangeably denote the number of distinct elements as $\|x_t\|_0$. 

In this work, we assume that the universe size $n$ and stream length $T$ are known up to constant factors to the streaming algorithm a priori for simplicity.
Without loss of generality, we will consider $n = \poly(T)$ by standard hashing tricks.

A streaming algorithm processes each update one at a time while maintaining only a bounded memory. In this work, we measure space in terms of words of size at least $\Omega(\log T)$ bits. This is a standard measurement, assuming that the length of the stream and the universe size can be stored in a constant number of words.
We also assume access to an oracle which can produce a sample of one-dimensional Gaussian noise $\calN(0,\sigma^2)$ which can be stored in constant words. See \cite{canonne2022discretegaussian} for background on sampling Gaussian noise for differential privacy.

We consider randomized streaming algorithms with non-adaptive adversaries. The stream is chosen independent of the randomness of the algorithm. While some private algorithms can succeed against adaptive adversaries~\cite{jain2023price}, the best (even non-private) streaming algorithms for many fundamental problems including distinct elements require space polynomial in the stream length \cite{attias2024adversarial}.

\subsection{Differential Privacy}

\begin{definition}[Event-Level Neighboring Datasets]
Let $X = (a_1, s_1), \ldots, (a_T, s_T)$ and $X' = (a'_1, s'_1), \ldots (a'_T, s'_T)$ be two strict turnstile data streams.
These streams are neighboring if there exists an index $i \in [T]$ such that $(a_i, s_i) \neq (a'_i, s'_i)$, and for all $j \neq i$, $(a_j, s_j) = (a'_j, s'_j)$.
\end{definition}

A different notion of item-level privacy is also studied where all updates which touch a given element $a \in [n]$ may change between neighboring dataset. In this work, we focus on the event-level definition.

\begin{definition}[Differential Privacy \cite{dwork2006calibrating}]
Let $\calA: \calX \to \calY$ be a randomized algorithm.
For parameters $\eps, \delta \geq 0$, $\calA$ satisfies $(\eps, \delta)$-DP if, for any two neighboring datasets $X, X' \in \calX$ and measurable subset of outputs $O \subseteq \calY$,
\begin{equation*}
    \Pr{\calA(X) \in O} \leq e^\eps \Pr{\calA(X') \in O} + \delta.
\end{equation*}
\end{definition}

We will also make use of a similar form of differential privacy called zero-concentrated differential privacy, $\rho$-zCDP, introduced by \cite{bun2016zcdp}.
$\rho$-zCDP implies $(\eps, \delta)$-DP and benefits from tighter and simpler composition of multiple private mechanisms.

\begin{definition}[R\'enyi Divergence \cite{renyi1961entropy}]
    Given two distributions $P, Q$ over $\calY$ and a parameter $\zeta \in (1, \infty)$, we define the \emph{R\'enyi Divergence} as
    $$
        D_\zeta(P \| Q) = \frac{1}{\zeta - 1} \log\p{\E[z \sim P]{\p{\frac{P(z)}{Q(z)}}^{\zeta-1}}}.
    $$
\end{definition}

\begin{definition}[Zero-Concentrated Differential Privacy (zCDP) \cite{bun2016zcdp}]
Let $\calA: \calX \to \calY$ be a randomized algorithm.
For parameter $\rho \geq 0$, $\calA$ satisfies $\rho$-zCDP if, for any two neighboring datasets $X, X' \in \calX$ and $\zeta \in (1, \infty)$,
\begin{equation*}
    D_\zeta(\calA(X) \| \calA(X')) \leq \rho \zeta.
\end{equation*}
\end{definition}

\begin{lemma}[zCDP Composition \cite{bun2016zcdp}]\label{lem:zcdp-composition}
Let $\calA: \calX \rightarrow \calY$ and $\calB: \calX \times \calY \rightarrow \calZ$ be algorithms satisfying $\rho_1$-zCDP and $\rho_2$-zCDP, respectively.
Then, the algorithm which, given an input $X \in \calX$, outputs $(\calA(X), \calB(X, \calA(X)))$ satisfies $(\rho_1 + \rho_2)$-zCDP.
\end{lemma}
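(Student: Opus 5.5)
This is the standard zCDP composition result of \cite{bun2016zcdp}; I would prove it directly from the definition via the chain rule for R\'enyi divergence. Fix neighboring $X, X'$ and $\zeta \in (1,\infty)$. Let $P_1, Q_1$ be the output distributions of $\calA(X)$ and $\calA(X')$. For each $y$, let $P_2(\cdot\mid y), Q_2(\cdot \mid y)$ be the distributions of $\calB(X,y)$ and $\calB(X',y)$. The joint output distributions are $P(y,z) = P_1(y)P_2(z\mid y)$ and $Q(y,z) = Q_1(y)Q_2(z\mid y)$, where the second coordinate uses fresh randomness of $\calB$ given $y$.

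The hypothesis on $\calB$ is read as: for every fixed $y$, the mechanism $X \mapsto \calB(X,y)$ is $\rho_2$-zCDP. Hence $D_\zeta(P_2(\cdot\mid y)\|Q_2(\cdot\mid y)) \le \rho_2\zeta$ uniformly in $y$.

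Next I expand $\exp((\zeta-1)D_\zeta(P\|Q))$. Working with $\int P^\zeta Q^{1-\zeta}$ (densities with respect to a common dominating measure), this equals
\[
\int P_1(y)^\zeta Q_1(y)^{1-\zeta} \left( \int P_2(z\mid y)^\zeta Q_2(z\mid y)^{1-\zeta}\,dz \right) dy .
\]
The inner integral is exactly $\exp((\zeta-1)D_\zeta(P_2(\cdot\mid y)\|Q_2(\cdot\mid y)))$, which is at most $e^{(\zeta-1)\rho_2\zeta}$ for every $y$. Pulling this bound out leaves $\int P_1^\zeta Q_1^{1-\zeta} = e^{(\zeta-1)D_\zeta(P_1\|Q_1)} \le e^{(\zeta-1)\rho_1\zeta}$. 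Taking logarithms and dividing by $\zeta-1>0$ gives $D_\zeta(P\|Q) \le (\rho_1+\rho_2)\zeta$, as claimed.

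The only point needing care is measure-theoretic: expressing the R\'enyi divergence in the equivalent form $\frac{1}{\zeta-1}\log\int P^\zeta Q^{1-\zeta}$ rather than the paper's expectation form. I must also check that the uniform-in-$y$ bound can be pulled out of the integral, which is simply monotonicity of the integral of a nonnegative function. If $P \not\ll Q$, the divergence bound for $\calA$ or $\calB$ would already be infinite, so the absolute-continuity cases are consistent. No other obstacle arises.
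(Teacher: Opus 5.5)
Your proof is correct and matches the standard argument. The paper does not prove this lemma but cites \cite{bun2016zcdp}, whose proof is the same as yours: factor $\int P^\zeta Q^{1-\zeta}$ over the product space, bound the inner conditional term uniformly in $y$ using the $\rho_2$-zCDP of $\calB(\cdot, y)$ for each fixed $y$, and then apply the $\rho_1$-zCDP of $\calA$ to the remaining marginal term.
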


\begin{lemma}[zCDP Translation \cite{bun2016zcdp}]\label{lem:zcdp-translation}
Any algorithm satisfying $\rho$-zCDP also satisfies $(\rho + 2\sqrt{\rho \log(1/\delta)}, \delta)$-DP for any $\delta > 0$.
\end{lemma}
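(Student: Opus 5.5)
The statement is the zCDP translation lemma: $\rho$-zCDP implies $(\rho + 2\sqrt{\rho \log(1/\delta)}, \delta)$-DP. The plan is the standard privacy-loss random variable argument of \cite{bun2016zcdp}. Fix neighboring $X, X'$ and write $P = \calA(X)$, $Q = \calA(X')$. Define the privacy loss $L(z) = \log\p{P(z)/Q(z)}$ for $z \sim P$.

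First, convert the Rényi bound into a moment generating function bound. By the definition of Rényi divergence, for every $\zeta > 1$,
\[
\E[z\sim P]{e^{(\zeta-1)L(z)}} = e^{(\zeta-1) D_\zeta(P\|Q)} \le e^{(\zeta-1)\zeta\rho}.
\]

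Second, apply Markov's inequality to get a tail bound on $L$. For any $\eps > 0$,
\[
\Pr[z\sim P]{L(z) > \eps} \le e^{(\zeta-1)(\zeta\rho - \eps)}.
\]
Third, take any measurable $O \subseteq \calY$ and split it according to whether the loss is small:
\[
P(O) \le P(O \cap \{L \le \eps\}) + P(L > \eps) \le e^\eps Q(O) + \Pr{L > \eps}.
\]
It therefore suffices to choose $\zeta$ so that $\Pr{L>\eps} \le \delta$. This amounts to requiring $(\zeta-1)(\eps - \zeta\rho) \ge \log(1/\delta)$, which we will satisfy with $\eps = \rho + 2\sqrt{\rho\log(1/\delta)}$.

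The only real computation is this final parameter choice. Take $\zeta = 1 + \sqrt{\log(1/\delta)/\rho}$. Then
\[
\zeta\rho = \rho + \sqrt{\rho\log(1/\delta)},
\]
so
\[
\eps - \zeta\rho = \sqrt{\rho\log(1/\delta)}.
\]
Multiplying by $\zeta - 1 = \sqrt{\log(1/\delta)/\rho}$ gives exactly $\log(1/\delta)$, as needed.

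One caveat is the logarithm base. The paper defaults to base-2 logarithms, but the lemma intends the natural logarithm in $\log(1/\delta)$, and the Rényi divergence must likewise use the natural logarithm. I would state this explicitly. The degenerate case $\rho = 0$ is trivial: then $P = Q$ almost surely, so the mechanism is $(0,0)$-DP. Since the argument holds for every pair of neighbors, this completes the proof.
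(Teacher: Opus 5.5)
Your proof is correct and is the standard privacy-loss argument of Bun and Steinke: an MGF bound from the R\'enyi divergence, then Markov's inequality, then the optimizing order $\zeta = 1+\sqrt{\log(1/\delta)/\rho}$. The paper cites this argument rather than reproving it. The only unstated edge case is $\delta \ge 1$, where your $\zeta$ is no longer valid but the claim is vacuous; your logarithm-base caveat is harmless, since reading either logarithm in base $2$ only weakens the claim.
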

Therefore, an algorithm satisfying $\rho$-zCDP also satisfies $(\eps, \delta)$-DP if $\rho = O(\eps^2/\log(1/\delta))$.

The simpler problem of DP Continual Counting was introduced in \cite{dwork2010differential, chan2011continual}.
In this setting, the stream is comprised of a sequence of updates in $\crb{-1, 0, 1}$ to an underlying count and the goal is to maintain a running approximation to the prefix sum of updates at every timestep with small additive error.
The problem is normally defined over a stream of bits with $b_i \in \crb{0,1}$.
Allowing decrements as well can be achieved with the same asymptotic additive error by using two counters to separately count increments and decrements.
We present the following result based on \cite{jain2023price} which uses the binary tree mechanism of \cite{dwork2010differential, chan2011continual} while leveraging Gaussian noise and zCDP. The result presented here is a special case of Theorem 5.4 of their work with $d=1$.

\begin{theorem}[Gaussian Binary Tree Mechanism]\label{thm:dp_continual_counting}
Let $\rho > 0$ be the privacy parameter.
Consider a stream of updates $b_1, \ldots, b_T$ where $b_i \in \{-1, 0,1\}$.
There exists a randomized algorithm $\calA(b_1, \ldots, b_T)$ which outputs estimates $\hat y_1^i, \ldots, \hat y_T^i$ with the following guarantees:
\begin{itemize}[leftmargin=*]
    \item 
    Let two inputs $b_1, \ldots, b_T$ and $b_1', \ldots, b_T'$ be neighboring if and only if there exists a $j \in [T]$ such that $b_j \neq b'_j$ and $b_t = b'_t$ for all $t \neq j$.
    Under this neighboring definition, $\calA$ preserves $\rho$-zCDP.
    \item
    For $t \in [T]$, let $y_t = \sum_{i=1}^t b_t$.
    Then, with probability $1 - 1/\poly(T)$,
    $$\max_{t \in [T]} \abs{y_t - \hat{y}_t} \leq O\p{\log^{1.5}(T)/\sqrt{\rho}}.$$
    \item
    $\calA$ can be implemented as a streaming algorithm using $O(\log(T))$ words of space.
\end{itemize}
\end{theorem}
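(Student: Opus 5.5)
The plan is to instantiate the classical binary tree mechanism with Gaussian noise and verify privacy, accuracy and space separately. Build a complete binary tree over $[T]$ (padding $T$ up to a power of two). Each node $v$ corresponds to a dyadic interval $I_v\subseteq[T]$ and stores the partial sum $\sum_{i\in I_v} b_i$ plus fresh noise $Z_v\sim\calN(0,\sigma^2)$. Each prefix $[1,t]$ decomposes into at most $L=\lceil\log T\rceil+1$ disjoint dyadic intervals, and we output $\hat y_t$ as the sum of the corresponding noisy node values. Since a node's value is fixed once its interval has ended, the outputs are a post-processing of the vector of all noisy node values. We release that vector in streaming fashion: node $v$ is emitted at the end of $I_v$.

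For privacy, changing a single $b_j$ to $b_j'$ alters $|b_j-b_j'|\le 2$ in exactly one node per level, i.e.\ in $L$ nodes. The vector of node sums therefore has $\ell_2$-sensitivity at most $2\sqrt{L}$. The Gaussian mechanism with per-coordinate variance $\sigma^2$ is $\Delta_2^2/(2\sigma^2)$-zCDP (standard, \cite{bun2016zcdp}). We set $\sigma^2 = 2L/\rho$ to get $\rho$-zCDP, and post-processing gives the claim for the outputs.

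For accuracy, each error $\hat y_t-y_t$ is a sum of at most $L$ independent $\calN(0,\sigma^2)$ variables, hence Gaussian with variance at most $L\sigma^2 = O(\log^2 T/\rho)$. A Gaussian tail bound at deviation $O(\sqrt{\log T})$ standard deviations, combined with a union bound over the $T$ time steps, yields with probability $1-1/\poly(T)$ that $\max_t|y_t-\hat y_t| = O(\log T\cdot\sqrt{\log T}/\sqrt\rho)=O(\log^{1.5}T/\sqrt\rho)$.

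For space, at time $t$ we only need the noisy values of the dyadic nodes in the current decomposition of $[1,t]$, at most one per level, together with the running partial sums of the at most $L$ currently open nodes. When a node closes, its noise is sampled (one oracle call) and it may replace its now-finished children. This is $O(\log T)$ words. The only step needing care is the sensitivity count: one must use $\ell_2$ rather than $\ell_1$ sensitivity so that the $L$ affected nodes cost $\sqrt{L}$ instead of $L$. This gives exactly the $\log^{1.5}$ rather than $\log^{2}$ bound.
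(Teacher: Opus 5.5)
Your proposal is correct and takes essentially the route the paper relies on. The paper does not reprove this statement; it cites the Gaussian binary tree mechanism of \cite{jain2023price} (Theorem 5.4 there, with $d=1$), and your argument is the standard proof of that result: $\ell_2$-sensitivity $2\sqrt{L}$ of the dyadic node sums, $\sigma^2 = 2L/\rho$ for $\rho$-zCDP, a Gaussian tail bound plus a union bound over the $T$ prefixes, and $O(\log T)$ words for the open and decomposition nodes (a newly closed node can replace deeper descendants in the decomposition, not just its children, but this does not affect the space bound).
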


\subsection{Other Preliminaries}

We use a version of the Johnson-Lindenstrauss lemma with Rademacher random variables. 

\begin{lemma}[Rademacher Johnson-Lindenstrauss \cite{achlioptas2003database}]\label{lemma:JL}
Let $x\in \R^n$ and let $A$ be an $m\times n$ matrix with i.i.d. random entries $a_{i,j}$ with $\Pr{a_{i,j}=-1/\sqrt{m}}=\Pr{a_{i,j}=1/\sqrt{m}}=1/2$. Assume that $m\geq C\log(1/\delta)/\alpha^2$ for a sufficiently large constant $C$. Then $\Pr{(1-\alpha)\|x\|_2^2\leq \|Ax\|_2^2\leq (1+\alpha)\|x\|_2^2}\geq 1-\delta.$
\end{lemma}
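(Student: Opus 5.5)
Since both the event and the matrix scale homogeneously in $x$, we may assume $\|x\|_2=1$ (the case $x=0$ is trivial). We also assume $\alpha<1$, which is the regime where the lemma is used. Write $Y_i=\sqrt{m}\,(Ax)_i=\sum_{j}\sigma_{ij}x_j$ with i.i.d.\ Rademacher signs $\sigma_{ij}$. Then $Y_1,\dots,Y_m$ are i.i.d., $\E{Y_i^2}=\|x\|_2^2=1$, and $\|Ax\|_2^2=\frac1m\sum_{i=1}^m Y_i^2$. So the lemma reduces to a two-sided concentration bound for an average of $m$ i.i.d.\ copies of $Y^2$ around its mean $1$. We prove it by the Chernoff method, treating the upper and lower tails separately and showing each has probability at most $\exp(-c\,m\alpha^2)$ for an absolute constant $c>0$. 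Taking $m\ge C\log(1/\delta)/\alpha^2$ with $C$ large then makes the total failure probability at most $\delta$.

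\textbf{Moment comparison with a Gaussian.} Let $g=\sum_j g_jx_j\sim\calN(0,1)$ with $g_j$ i.i.d.\ standard Gaussians. We claim $\E{Y^{2k}}\le \E{g^{2k}}$ for every integer $k\ge 1$. To see this, expand $\bigl(\sum_j\sigma_jx_j\bigr)^{2k}$ and $\bigl(\sum_j g_jx_j\bigr)^{2k}$ by the multinomial theorem and take expectations term by term. Monomials containing any odd power have expectation zero in both expansions. Every surviving monomial has a nonnegative coefficient and all exponents even. For such a monomial, $\E{\sigma_j^{2a}}=1\le \E{g_j^{2a}}$, so it is dominated term by term. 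Summing the power series then gives, for $0\le\lambda<1/2$,
\[
\E{e^{\lambda Y^2}}\le \E{e^{\lambda g^2}}=(1-2\lambda)^{-1/2}.
\]
As special cases we also get $\E{Y^4}\le 3$.

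\textbf{Upper tail.} By Markov's inequality applied to $e^{\lambda\sum_i Y_i^2}$ and independence,
\[
\Pr{\textstyle\sum_i Y_i^2\ge(1+\alpha)m}\le \bigl((1-2\lambda)^{-1/2}e^{-\lambda(1+\alpha)}\bigr)^m .
\]
This is exactly the Gaussian (chi-square) bound. Choosing $\lambda=\frac{\alpha}{2(1+\alpha)}$ gives $\bigl((1+\alpha)e^{-\alpha}\bigr)^{m/2}\le \exp\bigl(-\tfrac m4(\alpha^2/2-\alpha^3/3)\bigr)\le\exp(-m\alpha^2/24)$ for $\alpha<1$.

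\textbf{Lower tail (the main obstacle).} The domination argument points the wrong way for negative exponents, so $\E{e^{-\lambda Y^2}}$ cannot be compared to the Gaussian directly. Instead we use the pointwise inequality $e^{-u}\le 1-u+u^2/2$ for $u\ge0$ together with $\E{Y^2}=1$ and $\E{Y^4}\le 3$:
\[
\E{e^{-\lambda Y^2}}\le 1-\lambda+\tfrac32\lambda^2\le \exp\bigl(-\lambda+\tfrac32\lambda^2\bigr).
\]
Markov's inequality then gives
\[
\Pr{\textstyle\sum_i Y_i^2\le(1-\alpha)m}\le \exp\bigl(m(\lambda(1-\alpha)-\lambda+\tfrac32\lambda^2)\bigr)=\exp\bigl(-m(\lambda\alpha-\tfrac32\lambda^2)\bigr).
\]
Taking $\lambda=\alpha/3$ yields $\exp(-m\alpha^2/6)$.

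Both tails are therefore at most $\exp(-m\alpha^2/24)$, so the failure probability is at most $2e^{-m\alpha^2/24}$. This is at most $\delta$ once $m\ge C\log(1/\delta)/\alpha^2$ for a sufficiently large constant $C$ (absorbing the factor $2$, e.g.\ for $\delta\le1/2$), which completes the proof.
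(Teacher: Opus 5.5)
Your proof is correct, and it is essentially the argument of the source the paper cites; the paper gives no proof of its own and defers to Achlioptas, whose proof likewise uses Gaussian domination of the even moments of $\sum_j \sigma_j x_j$ to get $\E{e^{\lambda Y^2}}\le (1-2\lambda)^{-1/2}$ for the upper tail, and $e^{-u}\le 1-u+u^2/2$ with $\E{Y^4}\le 3$ for the lower tail. Your side conditions $\alpha<1$ and $\delta\le 1/2$ are genuinely needed, not cosmetic: for $\delta$ close to $1$ the hypothesis allows $m=1$, and then $x=(1,1,0,\ldots,0)/\sqrt{2}$ gives $\|Ax\|_2^2\in\{0,2\}$ always. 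These conditions hold wherever the paper applies the lemma (there $\delta=1/\poly(T)$ and $\alpha_0=\alpha/5<1$).
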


Throughout the paper, we instantiate various hash functions. In \cref{sec:minhash}, only pairwise independence is needed. In \cref{sec:domain-reduction,sec:f2}, we do not optimize for randomness and note that $O(\log T)$-wise independence suffices.

\section{Continual Distinct Elements via MinHash}\label{sec:minhash}

Without loss of generality, assume that $n$ is a power of two with $n=2^K$ for $K \in \N$ (we can always artificially increase the universe size to achieve this).
Consider a hash function $h: [n] \rightarrow [n]$. 
Given an integer $a \in [n]$, we define $\lsb(a) \in \crb{0, \ldots, \floor{\log n}}$ to be the (zero-indexed) index of the least significant non-zero bit of $a$ in its standard binary representation. For example, $\lsb(4) = 2$.
Consider the $(K+1)$-dimensional, $0$-indexed vector $f_t$ where, for any $k \in \crb{0, \ldots, K}$,
\begin{equation*}
    f_t[k] = \sum_{i=1}^t \one\sqb{\lsb(h(a_i))=k} \cdot s_i.
\end{equation*}
This quantity $f_t[k]$ is the sum of frequencies of distinct elements at time $t$ in the stream with $k$ trailing zeros.
For each $k$, we will maintain an estimate $\hat{f_t}[k]$ over all times $t \in [T]$ via $(K+1)$ DP Continual Counters $C[0], \ldots C[K]$ (\cref{thm:dp_continual_counting}).

A classic estimator for distinct elements is to consider the index of the largest non-zero bit.\footnote{
While this is a common technique for distinct element estimation, perhaps the most natural is to hash stream elements uniformly at random into [0,1] and use 1/(minimum hashed value) as an estimator for distinct elements.
It is not clear if this technique is amenable to privatization.
A naive calculation would upper bound the sensitivity of the minimum hashed value by O(1) (since the value is always in the interval [0, 1]), but this is not strong enough to be useful in the continual setting as the minimum hashed value could change often over the course of the stream as the result of a single event-level change.}
The non-private template is to identify the largest $\ell \in \{0, ..., K\}$ such that $f_t[\ell] > 0$ and report $\hat{D} = 2^{\ell}$.
The probability that a given element $a$ has $\lsb(a) = k$ is equal to $2^{-(k+1)}$, so among $D$ elements, we roughly expect to see at least one element with $\lsb \approx \log(D)$ and no elements with $\lsb \gg \log(D)$.

The challenge with implementing this estimator with differential privacy is that a single change to the stream can possibly change the identity of the largest non-empty entry in $f_t$ for many timesteps $t$.
By using continual counters, we can only guarantee that $\hat{f}_t[k]$ is approximated up to $\tau = \polylog(T)$ additive error.
So, instead of identifying the largest non-empty entry of $f_t$, we try to identify the largest entry of $f_t$ which exceeds this noise threshold $\tau$.
If stream elements were bounded to have at most constant frequency, this would immediately yield an algorithm with error $(O(1), \tau)$.\footnote{Under this promise, the algorithm would also succeed with high probability via a Chernoff bound without the need for independent repetitions.}
However, stream elements could have frequencies larger than $\tau$.
Therefore, we do not know if the bucket we find has count $f_t[\ell] > \tau$ because (a) $\ell \approx \log(D_t/\tau)$  and approximately $\tau$ elements of constant frequency have $\lsb = \ell$ or (b) $\ell \approx \log(D)$ and a single element of frequency more than $\tau$ has $\lsb = \ell$.
This is the source of the $O(\tau)$ multiplicative error.

\begin{algorithmbox}{MinHash Subroutine}
\label{alg:min-hash-subroutine}
\textbf{Input:} Privacy parameter $\rho$, stream length $T$, domain size $n=2^K < \poly(T)$.
\begin{enumerate}
    \item Initialize $\rho'' = \rho/2$.
    \item Initialize a pairwise independent hash function $h: [n] \rightarrow [n]$.
    \item For $k \in \crb{0, \ldots, K}$, initialize a DP Continual Counter (\cref{thm:dp_continual_counting}) $C[k]$ with privacy parameter $\rho''$. Let $\tau = \Theta(\log^{1.5}(T)/\sqrt{\rho})$ be an upper bound on the $(1-1/\poly(T))$ quantile of the maximum additive error of a counter over all timesteps $t \in [T]$.
    \item When a stream update $(a_t, s_t)$ arrives,
    \begin{enumerate}[label=(\alph*)]
        \item $k \gets \lsb(h(a_t))$.
        \item Update $C[k]$ with update $s_t$. Update all other counters $C[k']$ for $k' \neq k$ with update $0$.
        \item Let $\hat{f_t}[k']$ be the current estimate of $C[k']$ for all $k' \in \crb{0, \ldots, K}$. Let $\ell$ be the largest index such that $\hat{f_t}[\ell] > \tau$. If no such index exists, let $\ell \gets 0$.
        \item \textbf{Output: } $\hat{D}_t \gets 2^{\ell}$.
    \end{enumerate}
\end{enumerate}
\end{algorithmbox}

\begin{algorithmbox}{MinHash Estimator}
\label{alg:min-hash}
\textbf{Input:} Privacy parameter $\rho$, stream length $T$, domain size $n=2^K < \poly(T)$.
\begin{enumerate}
    \item Let $m = \Theta(\log T)$
    \item Initialize $\rho' = \rho/m$.
    \item Run $m$ copies of \cref{alg:min-hash-subroutine} in parallel with privacy parameter $\rho'$.
    \item For each timestep $t \in [T]$,
    \begin{enumerate}[label=(\alph*)]
        \item For $j \in [m]$, let $\hat{D}_t^j$ be the estimate of the $j$th subroutine at time $t$.
        \item Report estimate $\text{median}\p{\hat{D}_t^1, \ldots, \hat{D}_t^m}$.
    \end{enumerate} 
\end{enumerate}
\end{algorithmbox}

\begin{theorem}\label{thm:min-hash}
\cref{alg:min-hash} is $\rho$-zCDP and uses space $O(\log n \cdot \log^2 (T))$.
On strict turnstile streams, if $\rho < O(\log^4 T)$, \cref{alg:min-hash} solves continual distinct elements estimation with $(O\p{\log^2(T)/\sqrt{\rho}}, O\p{\log^2(T)/\sqrt{\rho}})$ error with probability $1-1/\poly(T)$.
\end{theorem}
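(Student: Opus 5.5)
The plan is to prove privacy and space by a direct accounting over the counters. For utility, I would first condition on all counters being accurate, then show that each copy of \cref{alg:min-hash-subroutine} is within the right range with constant probability at a fixed time $t$, and finally amplify with the median.

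\textbf{Privacy and space.} Fix one copy of \cref{alg:min-hash-subroutine} run with parameter $\rho'=\rho/m$, so each counter has parameter $\rho''=\rho'/2$. Let two event-level neighboring streams differ only at position $i$. Then the update sequences fed to the counters differ only at time $i$, and only in the (at most two) counters $C[\lsb(h(a_i))]$ and $C[\lsb(h(a'_i))]$. Every other counter receives an identical input sequence, so its output distribution is identical. Each of the two affected counters sees a neighboring input in the sense of \cref{thm:dp_continual_counting} and is $\rho''$-zCDP. By \cref{lem:zcdp-composition}, the joint output of all counters is $2\rho''=\rho'$-zCDP, and the estimates $\hat D_t$ are post-processing of it. Composing the $m$ independent copies (again \cref{lem:zcdp-composition}) gives $m\rho'=\rho$-zCDP. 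For space, there are $m=O(\log T)$ copies, each with $K+1=O(\log n)$ counters of $O(\log T)$ words. Adding $O(1)$ words per pairwise independent hash gives $O(\log n\log^2 T)$.

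\textbf{Utility for one copy.} Since each counter has privacy parameter $\rho/(2m)$, the threshold is $\tau=\Theta(\log^{1.5}T\sqrt{m}/\sqrt{\rho})=\Theta(\log^2T/\sqrt\rho)$. By a union bound over the $m(K+1)=\polylog(T)$ counters, with probability $1-1/\poly(T)$ every counter satisfies $|\hat f_t[k]-f_t[k]|\le\tau$ for all $t$; I condition on this event. Fix $t$ and write $D=D_t$.

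For the upper bound, strict turnstile implies $f_t[k]$ is a sum of nonnegative frequencies. Hence $\hat f_t[\ell]>\tau$ forces $f_t[\ell]>0$, so some present element has $\lsb(h(a))=\ell$. Since $\Pr{\lsb(h(a))\ge L}\le 2^{-L}$, a union bound over the $D$ present elements with $L=\lceil\log D\rceil+3$ gives $\hat D_t\le 2^{L}\le 16D$, except with probability at most $1/8$.

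For the lower bound, suppose $D\ge c\tau$ for a large constant $c$; otherwise the bound $\hat D_t\ge 1\ge D/(c\tau)-c\tau$ is trivial. Choose $L$ with $\mu:=D2^{-(L+1)}\in[8\tau,16\tau)$, and let $N$ be the number of present elements with $\lsb(h(a))=L$. By pairwise independence, $\mathrm{Var}(N)\le\mu$, so Chebyshev gives $\Pr{N<2\tau}\le \mu/(\mu-2\tau)^2\le 1/8$ for $\tau\ge 1$. Here the assumption $\rho<O(\log^4T)$ guarantees $\tau=\Omega(1)$. Every present element has frequency at least $1$, so $f_t[L]\ge N\ge 2\tau$, hence $\hat f_t[L]>\tau$ and $\hat D_t\ge 2^L\ge D/(32\tau)$.

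Thus a single copy satisfies $D/(32\tau)-c\tau\le\hat D_t\le 16D$ with probability at least $3/4$. The hash functions of the copies are independent, so by a Chernoff bound the median over $m=\Theta(\log T)$ copies satisfies the same bounds with probability $1-1/\poly(T)$. A union bound over $t\in[T]$ then gives $(\alpha,\beta)$ error with $p=32\tau$, $q=16$, $r=c\tau$, $s=0$, i.e.\ $\alpha,\beta=O(\tau)=O(\log^2T/\sqrt\rho)$.

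\textbf{Main obstacle.} I expect the lower-bound step to be the delicate one. It has to pick a single bucket $L$ whose count provably exceeds $2\tau$ using only pairwise independence. It also relies on strict turnstile in two places: frequencies $\ge 1$ are needed for $f_t[L]\ge N$, and nonnegativity is needed so that heavy buckets cannot cancel out. The upper bound similarly needs nonnegativity to rule out spurious high buckets.
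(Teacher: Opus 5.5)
Your proposal is correct and follows the paper's proof essentially step for step: the same at-most-two-affected-counters composition argument for privacy, a Chebyshev bound on a single bucket with expected load $\Theta(\tau)$ for the lower bound, an argument that no present element hashes to a high bucket for the upper bound (your union bound is slightly cleaner than the paper's product formula under pairwise independence), and median amplification over $m=\Theta(\log T)$ independent copies. Two trivial fixes are needed, and neither affects $\alpha,\beta=O(\tau)$: when $D_t=0$ the subroutine always outputs $1$, so you need $s=1$ rather than $s=0$ (the paper states $\hat D_t\le 4D_t+1$), and your Chebyshev bound $\mu/(\mu-2\tau)^2\le 2/(9\tau)$ is at most $1/8$ only once $\tau\ge 2$, not $\tau\ge 1$.
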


Note that the condition on $\rho$ not being too large relative to $T$ is mild as the privacy parameter is often chosen to be a small constant.

We first prove intermediate lemmas about the error and privacy of \cref{alg:min-hash-subroutine}.

\begin{lemma}\label{lem:min-hash-subroutine-privacy}
Given privacy parameter $\rho$ as input, \cref{alg:min-hash-subroutine} preserves $\rho$-zCDP.
\end{lemma}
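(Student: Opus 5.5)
The plan is to view \cref{alg:min-hash-subroutine} as post-processing of the joint output of the $K+1$ counters $C[0],\ldots,C[K]$. The hash function $h$ is sampled independently of the data, and the counter parameters are fixed in advance. Conditioned on $h$, each update $(a_t,s_t)$ is turned into a vector of counter inputs: value $s_t$ in coordinate $\lsb(h(a_t))$ and $0$ everywhere else. The threshold step and the output $2^\ell$ read only the counter estimates $\hat f_t[\cdot]$, so post-processing reduces privacy to showing that the collection of counter output streams is $\rho$-zCDP.

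Next I would analyze a single event-level change, where neighboring streams differ only at index $i$: $(a_i,s_i)$ versus $(a_i',s_i')$. Fix $h$, and let $k=\lsb(h(a_i))$ and $k'=\lsb(h(a_i'))$. Every counter other than $C[k]$ and $C[k']$ receives exactly the same input stream under both inputs, so its output distribution is unchanged. Counter $C[k]$ has inputs that differ only at time $i$, since the value $s_i$ may be replaced by $0$ or by $s_i'$. By the neighboring notion in \cref{thm:dp_continual_counting}, this is a neighboring pair, and the same holds for $C[k']$. Each counter is $\rho''$-zCDP with $\rho''=\rho/2$. The counters use independent randomness, so \cref{lem:zcdp-composition} applied to the two affected counters gives a total of $2\rho''=\rho$. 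The unaffected counters contribute $0$ by composition, because they have identical distributions. If $k=k'$, only one counter is affected and the cost is just $\rho/2$.

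The main subtlety is that \cref{lem:zcdp-composition} is stated for a fixed total budget, while here the set of affected counters depends on the neighboring pair. To handle this I would bound the Rényi divergence of the full product distribution directly. Since the counters are independent given $h$, the divergence is additive over coordinates: $D_\zeta$ of the product equals the sum of the per-counter divergences. At most two of these terms are nonzero, and each is at most $\rho''\zeta$. This gives $D_\zeta\le\rho\zeta$ conditioned on $h$. Finally, $h$ is independent data-free randomness that is not released. Joint convexity of the (exponentiated) Rényi divergence under mixing over $h$ therefore preserves the bound, and I would then apply post-processing to finish.
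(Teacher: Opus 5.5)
Your proposal is correct and follows the paper's proof. Both treat the subroutine as post-processing of the counter outputs and fix $h$. Both observe that only the two counters indexed by the least significant bits of $h(a_i)$ and $h(a_i')$ receive differing (neighboring) inputs, and then compose two $\rho/2$-zCDP guarantees, or a single one when the indices coincide. Your explicit use of additivity of R\'enyi divergence over the independent counters, and of joint convexity when averaging over $h$, makes rigorous the step the paper handles by fixing $h$ and invoking \cref{lem:zcdp-composition}.
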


\begin{proof}
Consider the algorithm $\calA$ which, at every timestep $t$, reports the $K+1$ dimensional vector $\hat{f}_t$.
Note that \cref{alg:min-hash-subroutine} is simply a post-processing of this hypothetical algorithm.
So, it suffices to consider the privacy of $\calA$.
Consider two neighboring data streams $X = (a_1, s_1), \ldots, (a_T, s_T)$ and $X' = (a'_1, s'_1), \ldots, (a'_T, s'_T)$ which are the same for all timesteps except for some $i \in [T]$ where $(a_i, s_i) \neq (a'_i, s'_i)$.

Note that the randomness of $h$ is independent of the randomness used for privacy of all counters $C[k]$. Fix $h$ and let $k_1 = \lsb(h(a_i))$ and $k_2 = \lsb(h(a'_i))$. The distribution over $(\hat{f_t}[k])_{t \in [T], k \in \{0, \ldots, K\} \setminus \{k_1, k_2\}}$ is the same for $\calA$ run on $X$ and $X'$. We will proceed by cases

Consider the case where $k_1 \neq k_2$, and consider the counters $C[k_1]$ and $C[k_2]$. The updates are the same to these counters under $X$ and $X'$ at all timesteps other than time $i$. At time $i$, under $X$, $C[k_1]$ receives update $s_i$ and $C[k_2]$ receives update $0$ while under $X'$, $C[k_1]$ receives update $0$ and $C[k_2]$ receives update $s'_i$.
This exactly corresponds to the neighboring datasets relation for DP Continual Counting in \cref{thm:dp_continual_counting}.
As $C[k_1]$ and $C[k_2]$ are $\rho''$-zCDP with $\rho'' = \rho/2$, by composition (\cref{lem:zcdp-composition}), $\calA$ is $\rho$-zCDP.

Consider the case where $k = k_1 = k_2$. The updates to $C[k]$ are the same for all timesteps except at the $i$th timestep where the update is $s_i$ under $X$ and $s'_i$ under $X'$. By the privacy of $C[k]$ given in \cref{thm:dp_continual_counting}, $\calA$ satisfies $(\rho/2)$-zCDP.
\end{proof}

\begin{lemma}\label{lem:min-hash-subroutine-error}
Consider any fixed timestep $t \in [T]$.
If $\rho = O(\log^3 T)$ and the stream is in the strict turnstile model, then the output of \cref{alg:min-hash-subroutine} satisfies
\begin{equation*}
    D_t/6\tau \leq \hat{D}_t\leq 4D_t + 1
\end{equation*}
with probability $2/3$.
\end{lemma}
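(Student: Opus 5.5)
Fix $t$, condition on the counters being accurate, and prove each side of the inequality by a single moment computation over the pairwise independent hash $h$. By \cref{thm:dp_continual_counting} and the choice of $\tau$, with probability $1-1/\poly(T)$ every counter satisfies $|\hat f_t[k]-f_t[k]|\le\tau/2$ simultaneously for all $k\in\{0,\dots,K\}$; this is possible because $\rho''=\rho/2$ and $\tau$ is a large constant times $\log^{1.5}(T)/\sqrt{\rho}$, with a union bound over $O(\log T)$ counters. I will assume this event from now on. It then suffices to show that each of the two bad events has probability at most $1/6$ (minus the $1/\poly(T)$ slack). The strict turnstile assumption gives $x_t[i]\ge 0$. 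Hence $f_t[k]=\sum_{i:\lsb(h(i))=k}x_t[i]$ is a sum of nonnegative terms. Moreover $f_t[k]>0$ if and only if some element of the support $S_t$ (with $|S_t|=D_t$) hashes to bucket $k$.

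\emph{Upper bound $\hat D_t\le 4D_t+1$.} If $D_t=0$ then all $f_t[k]=0$, so every $\hat f_t[k]\le\tau/2<\tau$. Thus $\ell=0$ and $\hat D_t=1$. Otherwise, let $k_0$ be the least $k$ with $2^k>4D_t$. Then $\hat D_t>4D_t$ forces some $\ell\ge k_0$ with $\hat f_t[\ell]>\tau$. That requires $f_t[\ell]>0$, so some $i\in S_t$ has $\lsb(h(i))\ge k_0$. Since $\Pr{\lsb(h(i))\ge k}=2^{-k}$, a union bound over $S_t$ gives probability at most $D_t2^{-k_0}<1/4$. The threshold constant may need adjusting; I would instead take $2^{k_0}>8D_t$ or absorb the slack into the constant. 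This uses only single-element marginals.

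\emph{Lower bound $\hat D_t\ge D_t/(6\tau)$.} If $D_t<6\tau$ the bound is trivial, since $\hat D_t\ge1$. Otherwise, choose $k^*$ with $2^{k^*+1}\approx D_t/(3\tau)$, that is, $k^*$ the largest index with $2^{k^*+1}\le D_t/(3\tau)$. It suffices to show that $f_t[k^*]\ge 3\tau/2$ with probability at least $5/6$. Then $\hat f_t[k^*]>\tau$, so $\ell\ge k^*$ and $\hat D_t\ge 2^{k^*}\ge D_t/(12\tau)$, and I will tune constants so that this reaches $D_t/(6\tau)$. Each element has integer frequency at least $1$, so $f_t[k^*]\ge N:=|\{i\in S_t:\lsb(h(i))=k^*\}|$. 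Here $N$ is a sum of pairwise independent indicators with mean $\mu=D_t2^{-(k^*+1)}\in[3\tau,6\tau]$ and variance at most $\mu$. Chebyshev then gives $\Pr{N<\mu/2}\le 4/\mu\le 4/(3\tau)$, which is small once $\tau$ is at least a constant. That is exactly where the hypothesis on $\rho$ enters: it guarantees $\tau=\Omega(1)$.

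\emph{Main obstacle.} The delicate part is the constant bookkeeping. The indices are discrete, so the chosen $k^*$ only pins $\mu$ within a factor of $2$. The claimed constants $6$ and $4$, together with the $2/3$ success probability, must all come out of the two failure probabilities summing to at most $1/3$ along with the $1/\poly(T)$ counter failure. If the stated constants do not close, I would pick $k^*$ with $\mu\in[6\tau,12\tau)$ and use the bucket-accuracy slack $\tau/2$ more carefully. I would also double-check that $\lsb$ of the value $0$ is handled: assigning it to bucket $K$ changes probabilities only by $1/n$.
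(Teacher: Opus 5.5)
\textbf{Overall approach.} Your plan is the paper's argument. You condition on all $K+1$ counters being accurate. You apply Chebyshev (pairwise independence suffices) to the bucket $k^*=\lfloor\log(D_t/6\tau)\rfloor$, whose mean lies in $[3\tau,6\tau)$. You bound the upper tail by the chance that some support element lands in a bucket $k$ with $2^k>4D_t$. Finally you add the failure probabilities $1/4+O(1/\tau)+1/\mathrm{poly}(T)$. Your union bound for the upper tail is slightly cleaner than the paper's product $(1-2^{-j})^{D_t}$, which tacitly uses full independence.

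\textbf{The gap: the lower-bound constant.} You flag this but do not close it, and your proposed repair goes the wrong way. With $k^*$ as above you only get $\hat D_t\ge 2^{k^*}>D_t/(12\tau)$. Choosing a bucket with $\mu\in[6\tau,12\tau)$ makes this worse. Since $2^{k^*}=D_t/(2\mu)$, a larger mean means a smaller index, and the guarantee drops to roughly $D_t/(24\tau)$.

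\textbf{How to fix it.} To reach $D_t/(6\tau)$ you must use $k=\lceil\log(D_t/6\tau)\rceil$, so that $2^k\ge D_t/(6\tau)$. Its mean $\mu=D_t2^{-(k+1)}$ is then only guaranteed to lie in $(1.5\tau,3\tau]$. The bucket is detected only if its true count exceeds $\tau$ plus the counter error. So a proof that uses only a bound on the counter error needs that bound to be a constant fraction of $\tau$ strictly below $1/2$. With your slack $\tau/2$ you would need $N>1.5\tau$, which has probability only about $1/2$ when $\mu$ is close to $1.5\tau$. Instead, take the counter error to be at most $\tau/4$. This is legitimate, since the constant hidden in $\tau=\Theta(\log^{1.5}(T)/\sqrt{\rho})$ is yours to choose, as you already did for $\tau/2$. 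Chebyshev then gives failure probability at most $\mu/(\mu-1.25\tau)^2<36/\mu<24/\tau$. On success, $\hat f_t[k]>1.25\tau-\tau/4=\tau$, so $\ell\ge k$ and $\hat D_t\ge 2^k\ge D_t/(6\tau)$. The upper-tail argument is unchanged.

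\textbf{Context.} The paper's own proof contains exactly this slip: it asserts $2^{\lfloor\log(D_t/6\tau)\rfloor}\ge D_t/(6\tau)$. Also, \cref{thm:min-hash} only needs the $O(\cdot)$ form, so this is a constant-factor issue rather than a conceptual one.

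\textbf{Bookkeeping.} The upper-tail failure is $<1/4$, not $\le 1/6$. You therefore need the lower-tail failure below about $1/12$, which holds once $\tau$ is a large enough constant (this is where $\rho=O(\log^3 T)$ enters). Do not switch to $2^{k_0}>8D_t$, since that would no longer give the stated $4D_t+1$.
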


\begin{proof}
By the definition of $\tau$, with probability $1/\poly(T)$, $$\tau \geq \max_{t \in [T], k \in \crb{0, \ldots, K}} \abs{f_t[k] - \hat{f}_t[k]}.$$
For the rest of the analysis, assume that this upper bound holds.

Consider any $i \in \crb{0, \ldots, K}$.
For a given element $a \in [n]$, $\Pr[h]{\lsb(a) = i} = 2^{-(i+1)}$ and $\Pr[h]{\lsb(a) \geq i} = 2^{-i}$.

We will first show that the estimate $\hat{D}_t$ cannot be too small.
It is always the case that $\ell \geq 0$, so $\hat{D}_t \geq 1$.
Note that if $D_t \leq 6\tau$, then $\hat{D}_t \geq D_t/6\tau$ by default.

Consider the case that $D_t > 6\tau$, and let $i = \floor{\log(D_t/6\tau)}$.
Let $Z_i$ be a random variable for the number of elements in $S_t$ with least significant bit equal to $i$.
Under full randomness, $Z_i$ would be distributed as $\Bin(D_t, 2^{-(i+1)})$.
Then, $\mu = \E{Z_i} \in [3\tau, 6\tau]$ and $\Var{Z_i} \leq \mu$.
By Chebyshev's inequality,
\begin{equation*}
    \Pr[h]{Z_i \leq 2\tau} \leq \Pr{|Z_i - \mu| \leq \mu - 2\tau} 
    \leq \frac{\Var{Z_i}}{(\mu - 2\tau)^2} 
    \leq \frac{6}{\tau}.
\end{equation*}
This quantity is upper bounded by $1/100$ as long as $\tau$ is a sufficiently large constant which is implied by $\rho = O(\log^3(T))$.

With probability at least $99/100$, there exists an index $i \geq \floor{\log \p{\frac{D_t}{6\tau}}}$ such that $Z_i > 2\tau$. As all elements in $S_t$ have frequency at least $1$ (this is where we use the strict turnstile model) and by the additive error given by \cref{thm:dp_continual_counting}, $\hat{f_t}[i] > \tau$. As this is a valid choice for $\ell$, $\ell \geq \floor{\log\p{\frac{D_t}{6\tau}}}$, so $\hat{D}_t = 2^{\ell} \geq \frac{D_t}{6\tau}$.

Next we will show that $\ell$ cannot be too large.
Any empty bucket with $Z_i = 0$ will have $\hat{f_t}[i] \leq \tau$.
Consider the case where $D_t > 0$ and let $j = \ceil{\log(4D_t)}$.
Then,
\begin{equation*}
    \Pr[h]{\sum_{i=j}^K Z_i = 0}
    = \p{1 - 2^{-j}}^{D_t}
    \geq 1 - 2^{-j}D_t
    \geq 3/4.
\end{equation*}
Under this event, $\ell < j$, so $\hat{D}_t \leq 4D_t$.
Note that if $D_t = 0$, then $\sum_{j=0}^K Z_i = 0$ and $\hat{D}_t = 1$.

Union bounding over all events, with probability $1 - 1/\poly(T) - 1/100 - 1/100 - 1/4 \geq 2/3$, the following holds:
\begin{equation*}
    D_t/6\tau \leq \hat{D}_t \leq 4D_t + 1.
\end{equation*}
\end{proof}

\begin{proof}[Proof of \cref{thm:min-hash}]
We will separately prove that the subroutine is private, accurate, and has bounded space.

\paragraph{Privacy}
\cref{alg:min-hash} is a post-processing of the $m$ copies of the subroutine \cref{alg:min-hash-subroutine}.
By the privacy of the subroutine (\cref{lem:min-hash-subroutine-privacy}) and composition (\cref{lem:zcdp-composition}), \cref{alg:min-hash} satisfies $m\rho' = \rho$-zCDP.

\paragraph{Error}
\cref{lem:min-hash-subroutine-error} bounds the error, for each timestep $t$, of each estimator $\hat{D}_t^j$ by
\begin{equation*}
    D_t/6\tau \leq \hat{D}_t \leq 4D_t + 1
\end{equation*}
where $\tau = \Theta\p{\log^{1.5}(T)/\sqrt{\rho'}} = \Theta\p{\log^2(T)/\sqrt{\rho}}$.
This error bound holds with probability $2/3$ under the condition that $\rho' = O(\log^3(T))$ which is equivalent to $\rho = O(\log^4(T))$.

By a standard argument, the median estimator will violate this error bound with probability $\exp(-\Omega(m)) = 1/\poly(T)$ via a Hoeffding bound.
The result follows by union bounding over all $T$ timesteps.\footnote{Technically, we prove the stronger result that the algorithm satisfies $(O\p{\log^2(T)/\sqrt{\rho}}, 1$) error. We choose to present the result as $(O\p{\log^2(T)/\sqrt{\rho}}, O\p{\log^2(T)/\sqrt{\rho}})$ as we only get non-trivial multiplicative approximation of the number of distinct elements when it exceeds $O\p{\log^2(T)/\sqrt{\rho}}$.}

\paragraph{Space}
Each subroutine \cref{alg:min-hash-subroutine} maintains $K+1=O(\log n)$ DP Continual Counters which each require $O(\log T)$ space due to \cref{thm:dp_continual_counting}.
In the error analysis of \cref{lem:min-hash-subroutine-error}, we require that the hash function $h: [n] \rightarrow [n]$ satisfies certain pseudorandomness conditions.
Specifically, we use the first and second moments of the random variables $Z_i$ under full randomness, which is guaranteed by a pairwise independent hash family.
Such a hash function can be stored in $O(1)$ words of space \cite{carter1979hash}.
So, a single subroutine uses $O(\log n \cdot \log T)$ words of space.
The overall algorithm with $m$ copies of the subroutine uses space $O(\log n \cdot \log^2(T))$.
\end{proof}

\section{Continual Distinct Elements via Domain Reduction}\label{sec:domain-reduction}

We now present another algorithmic technique which guarantees a $\polylog(T)$ multiplicative approximation to the number of distinct elements in a private turnstile stream (see \cref{sec:prelim}). The main theorem of this section is \cref{thm:distinct_elements_domain_reduction}. While the bounds of \cref{thm:distinct_elements_domain_reduction} are quantitatively worse than of our main theorem, \cref{thm:min-hash}, this section serves to demonstrate new ideas for the private continual estimation setting, which may be of independent interest. Additionally, they apply to the general turnstile model while \cref{thm:min-hash} applies only to the strict turnstile model where frequencies are non-negative at all times.

%This technique also allows us to draw a reduction from multiplicative error to additive error. In particular, \cref{thm:reduction} shows how to turn a private algorithm which has additive error $o(n)$, where $n$ is the domain size, to one that obtains arbitrarily close to $1$ \emph{multiplicative error} and polylogarithmic additive error. This relates to the first problem in our open problems section (\cref{sec:open_problems}), and we defer to the discussion there. 

\begin{algorithmbox}{Domain Reduction Estimator}\label{alg:distinct_elements_domain_reduction}
\textbf{Input:} Privacy parameter $\rho$, stream length $T$
\begin{enumerate} 
    \item Let $C, C'\ge 1$ be sufficiently large constants.
    \item Set $\rho' = \rho/(C \log^2 T)$.
    \item For $1 \le i \le \log T$ and $1 \le j \le C \log T$, construct independent functions $f^i_{g_j, h_j}: [n] \rightarrow [2^i]$ as in \cref{lem:domain_reduction}.
    \item For all $i,j$ and all time steps $t \in [T]$, compute a coordinate-wise frequency estimate $\tilde{f}^i_{g_j, h_j}(x_t)$ of $f^i_{g_j, h_j}(x_t)$ in the streaming continual release model, each of which is $\rho'$-zCDP. \textit{$\triangleright$ This guarantees $\|f^i_{g_j, h_j}(x_t) - \tilde{f}^i_{g_j, h_j}(x_t) \|_{\infty} \le \tau$ for all $i,j,t$ with failure probability $\beta$ where $\tau = O(\log^{1.5}(T)/\sqrt{\rho'})$ via \cref{thm:dp_continual_counting}.}
    \item For all $i$ and all time steps $t \in [T]$, let $\hat{F}^i(t)  \in \R^{2^i}$ denote the (coordinate-wise) median of $\tilde{f}^i_{g_1, h_1}(x_t), \ldots,\tilde{f}^i_{g_{100 \log T}, h_{100 \log T}}(x_t)$. \textit{$\triangleright$ Note $i$ is fixed.}
    \item For every time step $t \in [T]$, let $i^*(t)$ be the largest value such that all $\|\hat{F}^{i^*}(t)\|_{\infty} \ge C' \tau$.
    \item \textbf{Return:} $2^{i^*(t)}$ as an estimate of $\|x_t\|_0$ at time $t \in [T]$.
    
\end{enumerate}
\end{algorithmbox}

\begin{theorem}\label{thm:distinct_elements_domain_reduction}
Algorithm \ref{alg:distinct_elements_domain_reduction} satisfies $\rho$-zCDP. It solves the continual distinct elements estimation with \[\p{O(\log^{10}(T)/\rho^2), O(\log^{10}(T)/\rho^2)}\] error with probability $1-1/\poly(T)$. The algorithm uses polynomial in $T$ space and works in the general turnstile model.
\end{theorem}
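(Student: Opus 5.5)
The proof has three parts: privacy, the upper bound on the estimate, and the lower bound on the estimate. Throughout I assume that \cref{lem:domain_reduction} provides the following. Each $f^i_{g,h}$ is a linear map $\R^n \to \R^{2^i}$. Bucket $b$ of $f^i_{g,h}(x)$ equals $\sum_{a: h(a)=b} g(a)\, x[a]$, where $h$ is a hash into $[2^i]$ and $g$ takes bounded (e.g.\ $\pm 1$) random values. A single stream update $(a_t,s_t)$ therefore changes exactly one bucket, by $g(a_t)s_t \in \{-1,0,1\}$.

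\textbf{Privacy.} Fix $(i,j)$. The frequency estimate $\tilde f^i_{g_j,h_j}$ is a bank of \cref{thm:dp_continual_counting} counters, one per bucket, each with parameter $\rho'/2$. Two neighboring streams differ in one update, which affects at most two buckets. The case analysis from \cref{lem:min-hash-subroutine-privacy} then applies verbatim: conditioning on the independent randomness of $g_j,h_j$, the bank is $\rho'$-zCDP. There are $\log T \cdot C\log T$ such banks. By \cref{lem:zcdp-composition} the total privacy cost is $C\log^2 T \cdot \rho' = \rho$. The medians, the choice of $i^*(t)$, and the output are post-processing. Space is polynomial in $T$ because there are $\sum_i 2^i = O(T)$ counters per $j$, each using $O(\log T)$ words.

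\textbf{Accuracy, upper bound on $2^{i^*}$.} Condition on the event that every counter has error at most $\tau$ at all times; this fails with probability $1/\poly(T)$. Fix $t$ and set $D=\|x_t\|_0$. Consider a level $i$ with $2^i \ge 4D$. For a fixed $j$, with probability at least $3/4$ some bucket of $h_j$ contains no support element, so its true value is $0$ and its estimate is at most $\tau$. Hence the coordinate-wise median over the $C\log T$ independent copies of that bucket is at most $\tau < C'\tau$ with probability $1-1/\poly(T)$. One step needs care here: the bucket that is empty varies with $j$. I would fix it as follows. For each bucket $b$, the probability that $b$ is empty under $h_j$ is at least $(1-2^{-i})^D \ge 3/4$. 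So for each fixed $b$ a majority of the copies see $b$ empty, and the median for $b$ is small. Therefore no level with $2^i \ge 4D$ qualifies, and $2^{i^*} \le 4D+1$.

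\textbf{Accuracy, lower bound on $2^{i^*}$ (main obstacle).} Here general turnstile cancellations matter: a bucket holding many nonzero coordinates could still sum to near $0$. The random signs $g$ handle this. If bucket $b$ contains $k$ nonzero integer coordinates, then by a Littlewood--Offord / Paley--Zygmund argument on $\sum g(a)x[a]$ (second versus fourth moment, using $|x[a]|\ge 1$), $|\sum g(a)x[a]| \ge c\sqrt k$ with constant probability. Take $i$ with $2^i \approx D/(L\tau^2)$ for a suitable $L=\polylog T$. Then with constant probability per copy $j$, the bucket $b$ receives $k \gtrsim \tau^2 (C')^2$ support elements; this follows from Chebyshev under pairwise independence, and from Chernoff under the $O(\log T)$-wise independence allowed in the preliminaries. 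On that event its magnitude exceeds $2C'\tau$, so the noisy value exceeds $C'\tau$. The median over $j$ then makes this hold for every bucket at level $i$ with probability $1-1/\poly(T)$, after a union bound over buckets and times. So $i^* \ge i$ and $2^{i^*} \gtrsim D/(L\tau^2)$. Substituting $\tau^2 = \log^3 T/\rho' = O(\log^5 T/\rho)$, and accounting for the polylog slack of \cref{lem:domain_reduction} in $L$, the multiplicative error is $O(\log^{10}T/\rho^2)$. When $D$ is below this threshold, the trivial output $\ge 1$ is within the same additive error.

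The step I expect to be hardest is this anti-concentration argument for signed sums under limited independence, made uniform over all buckets. The crucial move is to run the median per bucket rather than per copy, so that the union bound goes through.
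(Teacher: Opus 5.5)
There is a genuine gap in your lower bound. Your privacy argument, the per-bucket median, and the upper bound are sound and follow the paper's proof. Your empty-bucket argument at $2^i \ge 4D$ even loses only a factor $4$, where the paper invokes \cref{lem:domain_reduction2} at $2^i \ge \ell D$.

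The problem is the median step. Amplifying over the $C\log T$ copies requires that, for each fixed bucket, a single copy succeeds with probability bounded away \emph{above} $1/2$; ``constant probability'' does not suffice. The argument you actually sketch is Paley--Zygmund on $S^2$, where $S=\sum_a g(a)x[a]$, using the second versus fourth moments. It cannot give this. Since $\E{S^4}=3(\sum_a x[a]^2)^2-2\sum_a x[a]^4$, for roughly equal weights it yields only about $(1-\theta)^2/3$. No argument using these two moments alone can beat $1/3$: a nonnegative variable with mean $1$ and second moment $3$ may equal $0$ with probability $2/3$. So per-copy success is certified only below $1/2$, even after multiplying by the near-$1$ probability of the bucket-size event. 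You then have no control over the median of the noisy magnitudes in that bucket, and cannot conclude that level $i$ passes the $C'\tau$ threshold.

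The missing ingredient is a genuine small-ball estimate, which is exactly what \cref{lem:domain_reduction} supplies. By the Erd\H{o}s--Littlewood--Offord bound, a random-sign sum of $k$ terms of magnitude at least $1$ lands in any fixed unit interval with probability $O(1/\sqrt{k})$. A union bound over the $O(c\sqrt{k})$ unit intervals covering $[-c\sqrt{k},c\sqrt{k}]$ then shows $|S|<c\sqrt{k}$ with probability $O(c)$, which is tiny for a small constant $c$. Combined with the bucket-size event, each copy succeeds with probability at least $0.97$, and the median works.

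This needs essentially full independence of $g$ within a bucket. That is affordable here: the theorem only claims $\poly(T)$ space and $n=\poly(T)$, so your worry about limited independence is moot for this statement.

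Finally, make sure the median is taken over the magnitudes $|\tilde f|$, as in the paper's definition of $f_{g,h}$ with an absolute value, not over your signed linear sums. Signed bucket sums are symmetric across independent copies, so their median would sit near $0$ even when most magnitudes are large.
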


The same techniques allow us to show a reduction from multiplicative error to additive error (in the domain size), showing that an existence of an algorithm (for our distinct elements problem) with \emph{sublinear} additive error in the domain size implies an algorithm which guarantees \emph{multiplicative} approximations arbitrarily close to $1$. In the theorem below, recall that $n$ is the size of the universe. We also restrict $\polylog(T)\ge \rho \ge 0$ in the theorem statement below, since we will use the algorithm of \cref{thm:min-hash}. 

\begin{theorem}\label{thm:reduction}
    Suppose there exists a constant $1 > c_1 \ge 0$  such that there is a $\rho$-zCDP algorithm $\mathcal{A}$ for the continual distinct elements problem which has additive error $n^{c_1}/\poly(\rho)$ and is correct with probability $1-1/\poly(T)$ for all time steps. Then for any $\eta \in (0, 1)$, there exists another $\rho$-zCDP algorithm $\mathcal{A'}$ (for the continual distinct elements problem) which achieves a multiplicative error $1+\eta$, additive error $\poly(\log(T), 1/\eta, 1/\rho)$ and is also correct with probability $1-1/\poly(T)$.
\end{theorem}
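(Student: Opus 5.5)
Our plan is to combine the coarse estimator of \cref{thm:min-hash} with a domain reduction, and then run the hypothetical algorithm $\mathcal{A}$ on the reduced domain. The parameters are chosen so that its additive error $m^{c_1}$, where $m$ is the reduced domain size, is at most an $\eta$-fraction of the true count.

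\textbf{Hashing.} For each scale $i \in \{0,\ldots,\log T\}$ we fix a hash $h_i:[n]\to[m_i]$ with $m_i = 2^i\cdot \poly(\log T,1/\eta,1/\rho)$, chosen large enough relative to $2^i$. If $D_t \approx 2^i$, then with good probability the number of colliding pairs among the $D_t$ support elements is about $D_t^2/m_i \le \eta D_t/4$. In the general turnstile model we also need that no bucket sum cancels. For this we compose with random signs, or random $\pm$ weights from a large field, via a function $g$ as in \cref{lem:domain_reduction}. Then a bucket containing at least one support element is nonzero with high probability. So $\|h_i(x_t)\|_0$ lies between $(1-\eta/4)D_t$ and $D_t$. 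The reduced stream on $[m_i]$ has the same event-level neighboring structure, since one update maps to one update.

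\textbf{Running $\mathcal{A}$ and combining.} We run a copy $\mathcal{A}_i$ of $\mathcal{A}$ on each reduced stream. Each copy gets privacy $\rho/(2(\log T+1))$, and a copy of the MinHash estimator gets the remaining $\rho/2$. By \cref{lem:zcdp-composition} and post-processing, the whole algorithm is $\rho$-zCDP. The additive error of $\mathcal{A}_i$ is $m_i^{c_1}/\poly(\rho)$. At each time $t$, MinHash (\cref{thm:min-hash}) gives $\hat D_t$ within a $\polylog(T)/\sqrt\rho$ factor (plus additive) of $D_t$. We then output the answer of $\mathcal{A}_{i}$ for the scale $i$ where $2^i$ equals $\hat D_t$ times the upper multiplicative slack $L=\polylog(T)/\sqrt{\rho}$. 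This guarantees $D_t\le 2^i\le L^2 D_t$ whenever $D_t$ is above the additive floor.

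\textbf{The main obstacle: the polylog blowup.} The error of $\mathcal{A}_i$ is about $(2^i P)^{c_1} \le (L^2 D_t P)^{c_1}$, where $P$ is the poly factor in $m_i$. This must be at most $\eta D_t$, that is,
\[
D_t^{1-c_1} \ge (L^2P)^{c_1}/(\eta\,\poly(\rho)).
\]
This holds once $D_t \ge \poly(\log T,1/\eta,1/\rho)$, with exponent roughly $1/(1-c_1)$, which is a constant. Below that threshold we fall back to the MinHash output, which already has additive error $\poly(\log T,1/\rho)$.

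\textbf{Remaining checks.} The collision bound uses $m_i \ge 2^i P \ge D_t\cdot 4/\eta$, so $P\ge 4/\eta$ suffices. Success of the hashing must hold simultaneously at all $t$. For that we use $O(\log T)$-wise independence together with $O(\log T)$ independent repetitions and a median, sharing the privacy budget, and then union bound over $T$ times and $\log T$ scales. A further subtlety is that the hash is fixed while the support changes; but the stream is non-adaptive, so per-time bounds plus a union bound suffice. Collecting the errors gives $(1+\eta)$ multiplicative and $\poly(\log T,1/\eta,1/\rho)$ additive error.
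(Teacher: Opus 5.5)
This is essentially the paper's proof: you use the MinHash estimator to select a power-of-two reduced domain of size about $D_t\cdot\poly(\log T,1/\rho)/\eta$, hash into it as in \cref{lem:domain_reduction3} with $O(\log T)$ independent copies per scale, and run $\mathcal{A}$ on each reduced stream with the budget split via \cref{lem:zcdp-composition}; you then take the median and absorb the $m^{c_1}$ additive error into the $(1+\eta)$ factor once $D_t$ exceeds a $\poly(\log T,1/\eta,1/\rho)$ threshold.

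The one step to discard is the random sign/weight map $g$, and it is not needed. A support element that is alone in its bucket always gives a nonzero bucket, whatever cancels elsewhere, so your collision count already yields $\|h_i(x_t)\|_0\ge(1-O(\eta))D_t$. As stated, the sign step is also wrong or potentially harmful: with $\pm1$ signs, two colliding unit frequencies cancel with probability $1/2$; large-field weights leave the $\{-1,0,1\}$ update format that $\mathcal{A}$ accepts; and signs turn a strict turnstile stream (your setting anyway, since you rely on MinHash) into a general turnstile one, which $\mathcal{A}$ may not handle.
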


At a high-level, \cref{alg:distinct_elements_domain_reduction} is giving a dimensionality reduction map which preserves the number of distinct elements up to poly-logarithmic multiplicative error. Unlike traditional dimensionality reduction however, we are not necessarily interested in reducing the dimension of the frequency vector (which is the size of the universe). Rather, we exploit a different property of dimensionality reduction which is that it increases the value of the coordinates in the reduced dimension. This is useful from a DP perspective, as large coordinates can be detected via continual counting.

The proofs of \cref{thm:distinct_elements_domain_reduction,thm:reduction} are based on a series of \emph{domain reduction} lemmas, which allow us to reduce the size of the universe, while approximately preserving the number of distinct elements of the stream. The first lemma, \cref{lem:domain_reduction}, gives anti-concentration bounds for the frequency of items when we randomly hash the domain to a smaller universe. It shows that if the reduced domain size is smaller than the number of non-zero entries of the frequency vector by polylogarithmic factors, then all the non-zero frequencies in the reduced domain are ``sufficiently large." Conversely, \cref{lem:domain_reduction2} shows that if the reduced domain is sufficiently larger than the number of non-zero entries, then any single coordinate in the reduced domain is likely to be zero. 

These two lemmas form the foundation of the proof of \cref{thm:distinct_elements_domain_reduction}. At a high level, we can reduce estimating the number of distinct elements to frequency estimation, which can be tracked up to polylogarithmic error via \cref{thm:dp_continual_counting}. To summarize, this is because if we reduce the domain to the "right size" (comparable to $\|x_t\|_0$ up to polylogarithmic factors), all domain elements will have \emph{large} frequencies, and hence can be detected via \cref{thm:dp_continual_counting}.  

Finally, the third domain reduction lemma, \cref{lem:domain_reduction3}, shows that the $\ell_0$ norm of a vector is preserved very precisely, if we again map the domain to a suitably larger domain. This is the main technical tool in \cref{thm:reduction}. 

\begin{lemma}[Domain Reduction Lemma 1]\label{lem:domain_reduction}
Suppose $n$ is sufficiently large. Let $h: [n] \rightarrow [m]$ and $g:[n] \rightarrow \{\pm 1\}$ be random functions. For a vector $x \in \mathbb{Z}^n$, define the mapping $f_{g,h}(x) \in \mathbb{Z}^{m}$ as
\[ \forall i \in [m], \quad f_{g,h}(x)_i = \left |\sum_{j \in [n], h(j) = i} g(j) \cdot x_j \right|, \]
with an empty sum denoting $0$.
If $1 \le m \le \frac{\|x\|_0}{\ell}$ for some $\ell \ge 100\log(n)$, then for every fixed $ i \in [m]$, we have
\[ \Pr[h , g]{f_{g,h}(x)_i  \ge \frac{\sqrt{\ell}}{1000}} \ge 0.97. \]
\end{lemma}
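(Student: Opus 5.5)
We condition on the random hash $h$ first and treat the signs $g$ separately. Fix $i\in[m]$ and let $S=\{j: x_j\neq 0\}$ with $|S|=\|x\|_0\ge m\ell$. The bucket size $N=|\{j\in S: h(j)=i\}|$ is distributed as $\Bin(|S|,1/m)$ with mean $\mu=|S|/m\ge \ell\ge 100\log n$. By a multiplicative Chernoff bound, $\Pr{N<\mu/2}\le e^{-\mu/8}\le n^{-12}$, which is negligible for $n$ large. So, with probability $1-o(1)$, the bucket contains at least $\ell/2$ nonzero integer entries $x_j$, each satisfying $|x_j|\ge 1$.

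Now condition on $h$ and on $N\ge \ell/2$, and consider $Y=\sum_{j\in S, h(j)=i} g(j)x_j$, a Rademacher sum with coefficient vector $v=(x_j)$ and $\|v\|_2^2\ge N\ge \ell/2$. We need $\Pr{|Y|<\sqrt{\ell}/1000}\le 0.02$ or so. The natural tool is an anti-concentration inequality for Rademacher sums whose coefficients are all at least $1$ in absolute value. The Erd\H{o}s--Littlewood--Offord inequality gives exactly this: for $N$ coefficients with $|v_j|\ge 1$, any open interval of length $2$ is hit with probability at most $\binom{N}{\lfloor N/2\rfloor}2^{-N}=O(1/\sqrt N)$. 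The interval $(-\sqrt\ell/1000,\sqrt\ell/1000)$ has length $\sqrt{\ell}/500$ and can be covered by at most $\sqrt{\ell}/1000+1$ intervals of length $2$. This gives
\[
\Pr{|Y|<\sqrt\ell/1000}\le \Big(\tfrac{\sqrt\ell}{1000}+1\Big)\cdot \frac{C_0}{\sqrt{N}}\le \Big(\tfrac{\sqrt\ell}{1000}+1\Big)\frac{C_0\sqrt2}{\sqrt{\ell}},
\]
where $C_0\approx \sqrt{2/\pi}\cdot$(const) comes from the central binomial coefficient, for instance $\binom{N}{N/2}2^{-N}\le 1/\sqrt{N}$ for $N\ge 1$. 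Since $\ell\ge 100\log n$ is large, the $+1$ term contributes $O(1/\sqrt{\ell})=o(1)$. The main term is about $\sqrt2/1000<0.01$. Combining this with the Chernoff failure probability gives a total failure probability below $0.03$.

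The main obstacle is making this anti-concentration step quantitatively clean. Two points need care. First, the coefficients are unbounded, so Berry--Esseen-type arguments fail when one huge $x_j$ dominates. Littlewood--Offord is exactly robust to this, which is why we use it rather than a CLT. Second, $h$ and $g$ are described as random functions, so we use full independence, or $O(\log n)$-wise independence, which suffices for the Chernoff step. For Littlewood--Offord we condition on $h$ and need $g$ fully independent on the bucket. If only limited independence of $g$ were available, the fallback would be Paley--Zygmund on $Y^2$, with $\E{Y^2}=\|v\|_2^2$ and $\E{Y^4}\le 3\|v\|_2^4$ from 4-wise independence. That only yields a constant such as $1/3$, not $0.97$, so we commit to the Littlewood--Offord route.
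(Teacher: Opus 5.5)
Your proposal is correct and follows the same route as the paper's proof: a concentration bound shows the bucket receives at least $\ell/2$ support elements, then the Erd\H{o}s--Littlewood--Offord inequality is applied to the conditional Rademacher sum, with a union bound over short intervals covering $(-\sqrt{\ell}/1000,\sqrt{\ell}/1000)$. Your bookkeeping is in fact tighter than the paper's. You use the sharp Erd\H{o}s bound $\binom{N}{\lfloor N/2\rfloor}2^{-N}\le 1/\sqrt{N}$ on intervals of length $2$, together with at most $\sqrt{\ell}/1000+1$ covering intervals, and this cleanly gives failure probability below $0.03$. By contrast, the paper uses $50/\sqrt{\ell}$ per unit interval and counts only $\sqrt{\ell}/2000$ such intervals, whereas about $\sqrt{\ell}/500$ are needed, so its stated constants only yield failure probability around $1/10$.
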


\begin{proof}
Fix an $i \in [m]$. Let $S_i = \{j \in [n] \mid j \in \text{supp}(x), h(j) = i\}$ denote the elements in the support of $x$ that hash to the $i$th coordinate. Note that $|S_i| \sim \Bin\left( \|x\|_0, \frac{1}m \right)$. Noting that $\E[h]{|S_i|} \ge \ell \ge 100 \log(n)$, we have that $|S_i| \ge \ell/2$ with probability at least $0.999$.

Now conditioned on $|S_i| \ge \ell/2$, we know that 
\[ \sum_{j \in S_i} g(j) \cdot x_j \]
is a sum of weighted Rademacher random variables, where each $|x_j| \ge 1$. Since $g$ is a random function, we can assume without loss of generality that $x_j \ge 1$. Thus, the Erdos-Littlewood-Offord bound \cite{Erdos1945}, again conditioned on $|S_i| \ge \ell/2 $, implies that
\[ \sup_{|I| \le 1} \Pr[g]{ \sum_{j \in S_i} g(j) \cdot x_j \in I} \le \frac{50}{\sqrt{\ell}}, \]
for any interval $I$ of length at most $1$.
By a union bound over intervals,
\[ \Pr[g]{ \sum_{j \in S_i} g(j) \cdot x_j \in \left [ - \frac{\sqrt{\ell}}{1000}, \frac{\sqrt{\ell}}{1000} \right] \middle| |S_i| \ge \ell/2} \le \frac{\sqrt{\ell}}{2000} \cdot\frac{50}{\sqrt{\ell}} \le \frac{1}{40}. \]
Therefore,  we have
\begin{align*}
     \Pr[h , g]{f_{g,h}(x)_i  \ge \frac{\sqrt{\ell}}{1000}}  &\ge \Pr[g]{ \sum_{j \in S_i} g(j) \cdot x_j \not \in \left [ - \frac{\sqrt{\ell}}{1000}, \frac{\sqrt{\ell}}{1000} \right] \middle| |S_i| \ge \ell/2} \cdot \Pr[h]{|S_i| \ge \ell/2} \\
     & \geq 0.999(0.975) \ge 0.97,
\end{align*}
as desired.
\end{proof}

\begin{lemma}[Domain Reduction Lemma 2]\label{lem:domain_reduction2} Consider the same setting as \cref{lem:domain_reduction} but suppose $m \ge \ell \|x\|_0$. Then for every fixed $i \in [m]$, we have 
\[ \Pr[h , g]{f_{g,h}(x)_i \ne 0} \le 0.01. \]
\end{lemma}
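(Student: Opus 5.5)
The plan is to bound the probability that any element of the support lands in coordinate $i$, and to ignore the signs $g$ completely. Fix $i \in [m]$ and let $S_i = \{j \in \text{supp}(x) : h(j) = i\}$, as in the proof of \cref{lem:domain_reduction}. Every coordinate $j \notin \text{supp}(x)$ contributes $g(j)\cdot x_j = 0$ to the sum defining $f_{g,h}(x)_i$. Hence if $S_i = \emptyset$, that sum is empty or zero, and $f_{g,h}(x)_i = 0$ for every choice of $g$. This gives the containment of events
\[
\{f_{g,h}(x)_i \ne 0\} \subseteq \{|S_i| \ge 1\},
\]
so it suffices to bound $\Pr[h]{|S_i| \ge 1}$.

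Each $j \in \text{supp}(x)$ satisfies $h(j) = i$ with probability $1/m$. A union bound over the $\|x\|_0$ support elements, or equivalently Markov's inequality applied to $|S_i| \sim \Bin(\|x\|_0, 1/m)$, gives
\[
\Pr[h]{|S_i| \ge 1} \le \E[h]{|S_i|} = \frac{\|x\|_0}{m} \le \frac{1}{\ell}.
\]
Here we used the hypothesis $m \ge \ell \|x\|_0$. In the setting of \cref{lem:domain_reduction}, $\ell \ge 100\log(n) \ge 100$ because $n$ is sufficiently large. Therefore $1/\ell \le 0.01$, which is the claimed bound.

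I do not expect a real obstacle. The only points to check are edge cases. If $x = 0$, then $\|x\|_0 = 0$, $S_i$ is empty, and the probability is $0$, consistent with the bound. The argument uses only that $\Pr{h(j) = i} = 1/m$ for each $j$, so it goes through for any hash family that is uniform on single elements; full independence of $h$ is not needed, and neither is any property of $g$.
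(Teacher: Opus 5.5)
Your proof is correct and follows the paper's argument: $S_i = \emptyset$ forces $f_{g,h}(x)_i = 0$ regardless of $g$, and Markov's inequality on $|S_i|$, whose mean is $\|x\|_0/m \le 1/\ell$, bounds the probability by $0.01$. Your remark that $\ell \ge 100\log(n) \ge 100$ justifies the final inequality $1/\ell \le 0.01$, which the paper uses without comment.
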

\begin{proof}
    We know $|S_i| \sim \Bin(\|x\|_0, \frac{1}m)$, so $\E[h]{|S_i|} \le \frac{1}\ell$. Thus, the probability that $|S_i| \ne 0$ is at most $1/\ell \le 0.01$ by Markov's inequality. In the case that $S_i = \emptyset$, it is clear that $f_{g,h}(x)_i = 0$, as desired.
\end{proof}

\begin{lemma}[Domain Reduction Lemma 3]\label{lem:domain_reduction3} Suppose $n$ is sufficiently large and let $h: [n] \rightarrow [m]$ be a random function as in \cref{lem:domain_reduction}. For a vector $x \in \R^n$, define the mapping $f_n(x) \in \R^m$ as 
\[ \forall i \in [m], \quad f_{h}(x)_i = \sum_{j \in [n], h(j) = i} x_j, \]
with an empty sum denoting $0$.
If $m \ge \|x\|_0 \cdot \ell \cdot \ell'$ for some $\ell, \ell' \ge 1$, then 
\[\Pr[h]{ \|x\|_0 \ge \|f_h(x)\|_0 \ge \left(1 - \frac{1}{\ell} \right) \|x\|_0} \ge 1 - \frac{1}{\ell'}. \]
\end{lemma}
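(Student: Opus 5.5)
The upper bound is deterministic and the lower bound follows from a collision-counting argument combined with Markov's inequality; only pairwise independence of $h$ is needed. Throughout, write $S = \mathrm{supp}(x)$ and $s = \|x\|_0 = |S|$. If $s = 0$ then $f_h(x) = 0$ and both inequalities hold trivially, so assume $s \ge 1$.

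\textbf{Upper bound.} For every $h$, if $f_h(x)_i \neq 0$ then the sum defining $f_h(x)_i$ has a nonzero term. So some $j \in S$ has $h(j) = i$. Hence the map $j \mapsto h(j)$ from $S$ onto the nonzero coordinates of $f_h(x)$ is surjective, and $\|f_h(x)\|_0 \le s$ always.

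\textbf{Lower bound.} Call $j \in S$ \emph{isolated} if no other $j' \in S \setminus \{j\}$ has $h(j') = h(j)$. If $j$ is isolated then $f_h(x)_{h(j)} = x_j \neq 0$. Distinct isolated elements occupy distinct buckets, so
\[ \|f_h(x)\|_0 \ge |\{j \in S : j \text{ isolated}\}| = s - N, \]
where $N$ is the number of non-isolated elements of $S$. Cancellations among real entries can only occur inside buckets containing at least two support elements. This is why counting isolated elements, rather than bounding cancellation, is the right approach.

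\textbf{Bounding $N$.} By a union bound over partners and pairwise independence of $h$,
\[ \mathbb{E}[N] \le \sum_{j \in S} \sum_{j' \in S\setminus\{j\}} \Pr[h]{h(j) = h(j')} = \frac{s(s-1)}{m} \le \frac{s^2}{s\,\ell\,\ell'} = \frac{s}{\ell \ell'}, \]
using $m \ge s\ell\ell'$. Markov's inequality then gives
\[ \Pr{N \ge s/\ell} \le \frac{1}{\ell'}. \]
So with probability at least $1 - 1/\ell'$ we have $N < s/\ell$, and hence $\|f_h(x)\|_0 \ge (1 - 1/\ell)s$. Combined with the deterministic upper bound, this gives the claim.

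There is no real obstacle. The only care needed is to count non-isolated \emph{elements} and bound them via pairs; counting collided buckets does not directly give the bound, because of possible cancellation. The argument uses no structure of $x$ beyond its support.
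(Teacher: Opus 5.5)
Your proof is correct and follows the paper's argument: the deterministic upper bound, then counting the support elements that are alone in their bucket and applying Markov's inequality to the number $N$ of non-isolated ones. The only difference is that you bound the probability that a given $j$ is not isolated by a union bound over its potential partners, so pairwise independence of $h$ suffices, whereas the paper computes it exactly as $1-(1-1/m)^{\|x\|_0-1} \le 1/(\ell\ell')$ using full independence; both give an expected value of $N$ of at most $\|x\|_0/(\ell\ell')$.
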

\begin{proof}
    It is clear that $\|x\|_0 \ge \|f_h(x)\|_0$ holds deterministically, as $f_h$ may have collisions. For the other direction, for $j \in \text{supp}(x)$, let $X_j$ denote the indicator variable that no other $j' \in \text{supp}(x)$ satisfies $h(j') = h(j)$. We have 
    \[\Pr[h]{X_j = 1} = \left(1 - \frac{1}{m} \right)^{\|x\|_0 - 1} \ge 1 - \frac{1}{\ell \cdot \ell'}.\]
    Thus, 
    \[ \E[h]{\sum_{j \in \text{supp}(x)} X_j} \ge \left( 1 - \frac{1}{\ell \cdot \ell'} \right) \|x\|_0 ,\]
    so by Markov's inequality, we have 
    \[ \Pr[h]{\left( \|x\|_0 - \sum_{j \in \text{supp}(x)} X_j \right) \ge \frac{\|x\|_0}{\ell}} \le \frac{1}{\ell'}, \]
    as desired.
\end{proof}

We are now ready to prove \cref{thm:distinct_elements_domain_reduction}.

\begin{proof}[Proof of \cref{thm:distinct_elements_domain_reduction}]
First we prove privacy. Consider just one fixed function $f^i_{g_j, h_j}$ for a fixed $i$ and $j$ in \cref{alg:distinct_elements_domain_reduction}. For simplicity, we just denote this as $f$. Let $(a_1, s_1)  \ldots, (a_T, s_T)$ and $(a_1', s_1')$, $\ldots$, $(a_T', s_T')$ be two neighboring streams for either event level privacy.  Then $(f(a_1), s_1)  \ldots, (f(a_T), s_T)$ and $(f(a_1'), s_1')$, $\ldots$, $(f(a_T'), s_T')$ are also neighboring streams (for the same privacy model). This is because if a single event changes in the stream from say $(a_i, s_i)$ to $(a_i', s_i')$, then this also induces at most a single event change in the stream after mapping under $f$ (note it could also induce no change if the domain elements $a_i$ and $a_i'$ already collide under $f$ and $s_i = s_i'$). 

Thus, \cref{alg:distinct_elements_domain_reduction} keeps track of $O(\log^2 T)$ different frequency estimations, each of which are $\rho'$-zCDP where $\rho'$ is an appropriate scaling of $\rho$. Then by composition via \cref{lem:zcdp-composition}, outputting all the frequency estimates $\tilde{f}$ under all functions $f$ that we considered is $\rho$-zCDP, as desired. Note the rest of the algorithm proceeds by post-processing which is also private. 

Now we argue for the approximation factor. Fix a timestep $t$. Assume for now that $\|x_t\|_0 \ge \Omega(\log^{5}(T)/\rho)$. The other case will be handled shortly. 
Now also consider one of our mappings $f$ to $[2^i]$. \cref{lem:domain_reduction} tells us that if $2^i \le \|x_t\|_0/\ell$ for $\ell \ge 100 \log(T)$, then for any fixed coordinate of the mapping $f$, that coordinate is larger than $\Omega(\sqrt{\ell})$ in absolute value with probability at least $97\%$. In particular, if we pick the largest $i$ such that $2^i \le O(\|x_t\|_0/\log(T))$, then the coordinate wise median of $f^{i}_{g_1, h_1}(x_t), \ldots, f^{i}_{g_{O(\log T)}, h_{O(\log T)}}(x_t)$ satisfies that all of its coordinates are larger than $\Omega(\sqrt{\ell})$ in absolute value with probability $1-1/\poly(T)$. This is still true if we consider the coordinates of $\hat{F}^i$ and set $\ell = \Theta(\log^3(T)/\rho')$, since we have additive error $\tau = O(\log^{1.5}(T)/\sqrt{\rho'})$ on every coordinate of $f^{i}_{g_j, h_j}(x_t)$. 

Conversely, now consider the case where $2^i \ge \ell \|x_t\|_0$ for the same choice of $\ell$ as above. Then the coordinate wise median of $f^{i}_{g_1, h_1}(x_t), \ldots, f^{i}_{g_{O(\log T)}, h_{O(\log T)}}(x_t)$ satisfies that all of its coordinates are $0$ with probability $1-1/\poly(T)$, and so $\|\hat{F}^i(t)\|_{\infty} < C'\tau$ by picking $C'$ sufficiently large. 

To summarize, plugging in our value of $\rho'$ and a union bound, we have that any $i$ where $2^i \le O(\|x_t\|_0 \rho /\log^5(T))$ satisfies $\|\hat{F}^i(t)\|_{\infty} \ge C' \tau$. Conversely, any $i$ where $2^i \ge \Omega(\log^5(T)\|x_t\|_0/\rho)$ satisfies $\|\hat{F}^i(t)\|_{\infty} < C'\tau$. Thus, our choice of $i^*(t)$ is a multiplicative $O(\log^{10}(T)/\rho^2)$ approximation to $\|x_t\|_0$ with probability $1-1/\poly(T)$. 

Now we handle the case where $\|x_t\|_0 \le O(\log^{5}(T)/\rho)$. Here, the above paragraph implies that we won't return anything larger than $2^i$ where $2^i \ge \ell \|x_t\|_0$ which is $O(\log^{10}(T)/\rho^2)$, which can be subsumed into our additive error. This completes the proof.\end{proof}

The proof of \cref{thm:reduction} uses similar ideas to that of \cref{thm:distinct_elements_domain_reduction}.

\begin{proof}[Proof of \cref{thm:reduction}]
For each $m_i = 2^i$, we consider $O(\log T)$ maps $h^i_1, \ldots, h^i_{O(\log T)}$ mapping $[n] \rightarrow [m_i]$ as in \cref{lem:domain_reduction3}. Each such map defines a new stream in the reduced domain $[m_i]$. Now for each map, we would like to run algorithm $\mathcal{A}$. First, we need to check that our definition of neighboring streams is still valid under a mapping. The proof of this is the same as in \cref{thm:distinct_elements_domain_reduction}. To summarize,  the mapping will never add any new differences between neighboring streams (but it may actually reduce the number of differences if a collision occurs).

Thus, we run $O(\log^2(T))$ many copies of $\mathcal{A}$ across all functions $h^i_j$. Naturally we must scale $\rho$ down by $O(\log^2(T))$ so that our output remain private via composition of \cref{lem:zcdp-composition}. In parallel, we also run another private algorithm $\mathcal{B}$ with $O(\log^2(T)/\sqrt{\rho})$ multiplicative and $1$ additive error (e.g. our algorithm from \cref{thm:min-hash}). Note that this also further requires scaling $\rho$ down by constant factors. We also assume $\mathcal{B}$ is correct throughout all steps $T$, which happens with probability $1 - 1/\poly(T)$. 

Now the purported algorithm $\mathcal{A'}$ operates by post-processing the outputs of the two algorithms $\mathcal{A}$ and $\mathcal{B}$ (and thus satisfies the desired privacy guarantees). $\mathcal{A'}$ does the following: at every time step $t$, it first uses $\mathcal{B}$ to compute the smallest power of $2$, denoted as $m_t$, which satisfies $O(\|x_t\|_0) \ge m_t \ge \Omega( \|x_t\|_0 \cdot \sqrt{\rho}/\log^{2}(T))$. Then it outputs the median of the answers released by $\mathcal{A}$, but only on the maps corresponding to the next power of $2$, denoted as $m_t'$, which is larger than $O\left( \frac{m_t \log^{2}(T)}{\sqrt{\rho} \eta} \right)$ (so that $m_t'$ satisfies the hypothesis of \cref{lem:domain_reduction3} as we shall see shortly).

For correctness, fix a time step $t$. \cref{lem:domain_reduction3} implies that as long as $m_t' \ge \Omega(\|x_t\|_0/\eta)$, which is true by our choice, then the number of distinct elements under a map $h: [n] \rightarrow [m_t']$ as in $\cref{lem:domain_reduction3}$ satisfies that  $\|f_h(x)\|_0$ is a $1+\eta$ approximation to $\|x_t\|_0$ with probability $99\%$. In which case, the median answer across $O(\log T)$ independent copies of $h$ is thus a $1+\eta$ approximation with probability $1-1/\poly(T)$. However, each of our calls of $\mathcal{A}$ gives additive error $ \left( \frac{\|x_t\|_0}{\eta } \right)^{c_1} \cdot \poly(\log(T), 1/\rho) $, so our median output is a $1+\eta$ multiplicative and $\left( \frac{\|x_t\|_0}{\eta } \right)^{c_1} \cdot \poly(\log(T), 1/\rho)$ additive error estimate. Since $c_1 < 1$, $\|x_t\|_0^{c_1}$ can be absorbed into the multiplicative error, if $\|x_t\|_0 $ is sufficiently larger than $\poly(\log(T), 1/\eta, 1/\rho)$. This completes the proof. \end{proof}

\section{Continual $F_2$ estimation}
\label{sec:f2}

We next consider the problem of $F_2$ estimation. To recap, the stream consists of $T$ stream elements $S=(a_t,s_t)_{t\in [T]}$ where $a_t\in [n]$ and $s_t\in \{-1,0,1\}$. For $t\in [T]$ and $j\in [n]$, we define $x_t[j]=\sum_{i\leq t: a_i=j }s_i$. We assume the strict turnstile model where $x_t[j]\geq 0$ always. Define $x^{(t)}$ to be the $n$-dimensional vector $(x_t[j])_{j\in [n]}$. In this section our interest is to estimate the second moment $F^{(t)}(S)=\sum_{j\in [n]}(x_t[j])^2=\|x^{(t)}\|_2^2$ at any point of time $t$.

We first prove a simple lower bound using standard techniques based on the sensitivity of the second moment when a single stream element is changed. Note, that the lower bound works even against non-streaming algorithms that see the entire stream up-front.
\begin{lemma}\label{lem:f2-lowerbound}
Let $\mathcal{S}$ be the set of all streams $S=(a_i,s_i)_{i\in [T]}$. Let $\mathcal{M}:\mathcal{S}\to R$ be an $(\epsilon,\delta)$-DP protocol. Then there exists some $S_0\in\mathcal{S}$ such that
\[
\Pr{|\mathcal{M}(S_0)-F_2^{(T)}(S_0)|\geq T-1}\geq \frac{1-\delta}{e^\eps+1}.
\]
\end{lemma}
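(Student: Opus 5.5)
The plan is a standard two-point (packing) argument using a pair of neighboring streams whose $F_2$ values differ by more than $2(T-1)$.

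\textbf{Construction.} Let $S_0$ be the stream that inserts element $1$ at every one of the $T$ timesteps, so $x_T[1]=T$ and $F_2^{(T)}(S_0)=T^2$. Let $S_1$ be the neighboring stream obtained by replacing the last update $(1,1)$ with $(2,1)$. Then $x_T[1]=T-1$ and $x_T[2]=1$, so
\[
F_2^{(T)}(S_1)=(T-1)^2+1=T^2-2T+2 .
\]
Both are valid strict turnstile streams, and they differ in a single event. The gap is $F_2^{(T)}(S_0)-F_2^{(T)}(S_1)=2T-2=2(T-1)$.

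\textbf{Disjoint events.} Let
\[
O_0=\{y: |y-T^2|<T-1\},\qquad O_1=\{y: |y-(T^2-2T+2)|<T-1\}.
\]
These open intervals have radius $T-1$ and centers exactly $2(T-1)$ apart, so they are disjoint. Set $p_b=\Pr{\mathcal{M}(S_b)\in O_b}$ for $b\in\{0,1\}$.

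\textbf{Argument by contradiction.} Suppose the conclusion fails for both streams. Then $p_0,p_1> 1-q$, where $q=\frac{1-\delta}{e^\eps+1}$. Applying the DP guarantee to $S_1\to S_0$ with output set $O_1$ gives
\[
p_1\le e^\eps\Pr{\mathcal{M}(S_0)\in O_1}+\delta\le e^\eps(1-p_0)+\delta .
\]
Hence $1-q<e^\eps q+\delta$, which rearranges to $(1-\delta)<(e^\eps+1)q=1-\delta$. This is a contradiction. So one of $S_0,S_1$ satisfies
\[
\Pr{|\mathcal{M}(S)-F_2^{(T)}(S)|\ge T-1}\ge q,
\]
and we rename that stream $S_0$.

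\textbf{Where care is needed.} The only delicate points are boundary conventions: the intervals must be open and the gap exactly $2(T-1)$, so that the failure events are non-strict inequalities $\ge T-1$, matching the statement. The DP inequality must also be applied in the right direction. If $\mathcal{M}$ outputs a full continual sequence, we apply the argument to the last output, which is a post-processing of $\mathcal{M}$ and is therefore still $(\eps,\delta)$-DP.
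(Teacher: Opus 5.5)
Your proposal is correct and follows essentially the same argument as the paper. It uses the same neighboring pair (the all-$(1,1)$ stream versus the one whose last update is changed to $(2,1)$), the same $F_2$ gap of $2T-2$, and a single application of the DP inequality; the only cosmetic difference is that the paper splits at the midpoint $\lambda$ with half-lines and argues directly by cases, while you use two disjoint open intervals and argue by contradiction.
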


Thus, the additive error incurred by any DP $F_2$ estimation algorithm is $\Omega(T)$. 

\begin{proof}[Proof of \cref{lem:f2-lowerbound}]
Let $S$ be the stream where all stream elements are $(a_i,s)=(1,1)$ and then $S'$ be the neighboring stream which is identical to $S$ except that $(a_T,s_T)=(2,1)$. Then, $F_2^{(T)}(S)=T^2$ and $F_2^{(T)}(S')=(T-1)^2+1^2$ so that, $F_2^{(T)}(S)-F_2^{(T)}(S')=2T-2$. Define $\lambda=\frac{F_2^{(T)}(S)+F_2^{(T)}(S')}{2}$ and $p=\frac{1-\delta}{e^\eps+1}$. If $\Pr{\mathcal{M}(S)\leq \lambda}\geq p$, the result follows with $S_0=S$, so assume that $\Pr{\mathcal{M}(S)\leq \lambda}< p$. Then, by differential privacy,
\[
\Pr{\mathcal{M}(S')> \lambda}\geq \frac{\Pr{\mathcal{M}(S)> \lambda}-\delta}{e^\eps}>\frac{1-p-\delta}{e^\eps}=p,
\]
so the result follows with $S_0=S'$.
\end{proof}

Notice that the lower bound instance used to prove the above lemma has a single element $i\in [n]$ with frequency $\Omega(T)$. However, for such an input $S$, the value $F_2(S)$ is of the order $\Omega(T^2)$ which is much larger than the additive error.
Motivated by this observation, we ask if it is possible to obtain better additive error of our estimates if we allow a small \emph{multiplicative} error of $(1+\alpha)$. In this section, we show that for any small constant $\alpha$, it is possible to obtain additive error $\polylog(T)$. Note that \cite{epasto2023freqmoment} obtain a qualitatively similar bound, but in the restricted setting of insertion-only streams. 

At a high-level, our algorithm is inspired by the classic AMS sketch for $F_2$ estimation in standard streaming \cite{alon1996space}, and uses the classic Johnson-Lindenstrauss (JL) lemma \cref{lemma:JL} to reduce the domain size from $[n]$ to $\polylog(T)$.  By using a JL construction which utilizes Rademacher random variables, we are able to approximately (and privately) track the frequencies of the domain elements \emph{after} applying the JL lemma via continual counting (see \cref{thm:dp_continual_counting}). Composition (\cref{lem:zcdp-composition}) allows us to simultaneously release the frequencies in the reduced domain with a low-additive error overhead since the \emph{size} of the domain has reduced significantly. Altogether, we show that the frequencies in the reduced domain can all be simultaneously estimated up to $\polylog(T)$ additive error, which implies our desired error bound. The $1+\alpha$ multiplicative error is incurred by the JL step itself.

\begin{algorithmbox}{$F_2$ Estimator}
\label{alg:F2}
\textbf{Input:} Privacy parameter $\rho$, stream length $T$, domain size $n$, approximation parameter $\alpha>0$.
\begin{enumerate}
    \item Let $\alpha_0=\alpha/5$.
    \item Let $m= C_1(\log T)/\alpha_0^2$ with $C_1$ sufficiently large according to~\cref{lemma:JL}.
    \item Initialize random $m\times n$ matrix $A$ where the $A_{ij}$ are i.i.d with $\Pr{A_{ij}=1/\sqrt{m}}=\Pr{A_{ij}=-1/\sqrt{m}}=1/2$.
    \item For $i \in \crb{1, \dots, m}$, initialize a DP Continual Counter (\cref{thm:dp_continual_counting}) $C[i]$ with privacy parameter $\rho'=\rho/m$.
    \item When a stream element $(a_t,s_t)$ arrives,
        \begin{enumerate}[label=(\alph*)]
        \item Let $j=a_t\in [n]$.
        \item For each $i\in [m]$, update $C[i]$ with $\sqrt{m} A_{ij}s_j\in \{-1,0,1\}$. 
    \end{enumerate}
    \item For each timestep $t \in [T]$,
    \begin{enumerate}[label=(\alph*)]
        \item Report estimate $\frac{1}{m}\sum_{i=1}^m C[i]^2$.
    \end{enumerate} 
\end{enumerate}
\end{algorithmbox}

\begin{theorem}\label{thm:F2-estimation}
Let $0<\alpha<1$ and privacy parameter $\rho>0$. There exists an $\rho$-zCDP algorithm which for all $t\in [T]$ provides an estimate $\hat F_2^{(t)}$ of $\|x^{(t)}\|_2^2$ such that with high probability in $T$,
\[
|\hat F_2^{(t)}-\|x^{(t)}\|_2^2|\leq \alpha\|x^{(t)}\|_2^2+O\left(\frac{(\log T)^4}{\alpha^3\rho}\right),
\]
Moreover, the algorithm can be implemented in the streaming model where it uses $O((\log T)^2/\alpha^2)$ words of memory.
\end{theorem}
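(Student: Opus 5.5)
We will analyze \cref{alg:F2} directly, in three parts: privacy, accuracy, and space.

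\textbf{Privacy.} Fix the matrix $A$, whose randomness is independent of the counters' noise. Changing a single event $(a_t,s_t)$ to $(a'_t,s'_t)$ changes, for each $i\in[m]$, only the $t$th update to counter $C[i]$. That update goes from $\sqrt m A_{i a_t}s_t$ to $\sqrt m A_{i a'_t}s'_t$, and both values lie in $\{-1,0,1\}$. This is exactly the neighboring relation in \cref{thm:dp_continual_counting}, so each counter is $\rho'=\rho/m$-zCDP. Composing the $m$ counters with \cref{lem:zcdp-composition} gives $\rho$-zCDP, and the reported estimate is a post-processing of the counters.

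\textbf{Accuracy.} Let $y^{(t)}=\sqrt m\,Ax^{(t)}$. Counter $C[i]$ tracks $y^{(t)}_i$ with error $e_i^{(t)}$, where
\[
|e_i^{(t)}|\le \tau=O\!\left(\log^{1.5}(T)\sqrt{m/\rho}\right)
\]
simultaneously for all $i,t$ with probability $1-1/\poly(T)$.
\begin{itemize}
\item \emph{JL step.} Since the stream is fixed non-adaptively, $x^{(t)}$ does not depend on $A$. Applying \cref{lemma:JL} with failure probability $1/\poly(T)$ (the choice $m=C_1\log T/\alpha_0^2$ permits this) and taking a union bound over the $T$ timesteps gives
\[
\|Ax^{(t)}\|_2^2=\tfrac1m\|y^{(t)}\|_2^2\in(1\pm\alpha_0)\|x^{(t)}\|_2^2 \quad\text{for all } t.
\]
\item \emph{Noise step.} Expand the estimate:
\[
\frac1m\sum_i (y_i+e_i)^2=\frac1m\|y\|_2^2+\frac2m\langle y,e\rangle+\frac1m\|e\|_2^2 .
\]
The last term is at most $\tau^2$. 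For the cross term, use Cauchy--Schwarz followed by AM-GM:
\[
\frac2m|\langle y,e\rangle|\le 2\sqrt{\tfrac1m\|y\|_2^2}\sqrt{\tfrac1m\|e\|_2^2}\le \alpha_0\tfrac1m\|y\|_2^2+\tau^2/\alpha_0 .
\]
\item \emph{Combining.} The total error is at most
\[
\bigl((1+\alpha_0)^2-1\bigr)\|x\|_2^2+O(\tau^2/\alpha_0)\le \alpha\|x\|_2^2+O(\tau^2/\alpha),
\]
using $\alpha_0=\alpha/5$ and $\alpha<1$. Substituting $\tau^2=O(\log^3 T\cdot m/\rho)=O(\log^4T/(\alpha^2\rho))$ yields the additive term $O(\log^4 T/(\alpha^3\rho))$, as claimed.
\end{itemize}

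\textbf{Main obstacle.} The main care point is the cross term. Bounding it naively by $\tau\|y\|_1/m$ would produce a multiplicative loss scaling with $\tau$. The AM-GM split instead trades it for a $1+\alpha$ multiplicative factor plus an additive $\tau^2/\alpha$, and this is what produces the $1/\alpha^3$ dependence.

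A second point to verify is that \cref{lemma:JL} needs only $O(\log T)$-wise independent Rademacher entries. This is required to store $A$ succinctly: the entries are generated from a hash seed of $O(\log T)$ words per row, rather than stored as an explicit $m\times n$ matrix. If such limited-independence JL is accepted, as the paper states in the preliminaries, then the space is $m$ rows times $O(\log T)$ words for the hash functions plus $O(\log T)$ words for each counter, i.e.\ $O(\log^2T/\alpha^2)$ words in total.
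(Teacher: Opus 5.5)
Your proposal is correct and follows the paper's proof. It uses the same algorithm, the same per-counter neighboring argument composed over the $m$ counters at $\rho/m$ each, the same JL union bound over timesteps, and the same parameter substitution yielding $O(\log^4 T/(\alpha^3\rho))$. The only local difference is how you bound the noise. You expand the square and bound the cross term with Cauchy--Schwarz and AM--GM. The paper instead writes $(\hat y_t^i)^2-(y_t^i)^2=(\hat y_t^i-y_t^i)(\hat y_t^i+y_t^i)$ and splits coordinates by whether $|y_t^i|\le \lambda/\alpha_0$, which is effectively a coordinate-wise AM--GM. Both give an $O(\alpha_0)$ multiplicative term plus an $O(m\lambda^2/\alpha_0)$ additive term, where the paper's $\lambda$ is your $\tau/\sqrt{m}$.
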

Note again that to obtain an algorithm which satisfies $(\eps,\delta)$-DP, it suffices to pick $\rho = O(\eps^2/\log(1/\delta))$.
The algorithm behind this result is~\cref{alg:F2}. 
\begin{proof}[Proof of \cref{thm:F2-estimation}]
The algorithm appears as~\cref{alg:F2}. Let $\alpha_0>0$ to be fixed later and let $A$ be the random matrix of the Johnson-Lindenstrauss~\cref{lemma:JL} with $m\geq C_1(\log T)/\alpha_0^2$ with $C_1$ sufficiently large such that with high probability in $T$, it holds for all $t\in [T]$ simultaneously that if  $y^{(t)}=Ax^{(t)}$, then $\left|\|y^{(t)}\|_2^2-\|x^{(t)}\|_2^2\right|\leq \alpha_0\|x^{(t)}\|_2^2$ . Our algorithm uses a DP-algorithm for continual counting to estimate each entry of $y^{(t)}$. To see why this is possible, fix $i$ and denote by $A_i$ the $i$th row of $A$. Then
\[
(Ax^{(t)})_i=A_ix^{(t)}=\sum_{j=1}^n A_{ij}x_t[j].
\]
Since the entries of $A$ are $\pm 1/\sqrt{m}$, it follows that changing a single $x_t[j]$ by $\pm 1$ will change the count $(Ax^{(t)})_i$ by at most $O(1/\sqrt{m})$. To be precise, define $b_1,\dots, b_T$ by
\[
b_t=\sum_{j=1}^n A_{ij}x_t[j]-\sum_{j=1}^n A_{ij}x_{t-1}[j]=\sum_{j=1}^n A_{ij}(x_t[j]-x_{t-1}[j])\in \{-1/\sqrt{m},0,1/\sqrt{m}\},
\]
so that $y_t^i:=\sum_{s=1}^tb_s=\sum_{j=1}^n A_{ij}x_t[j]$.
Then, if an adjacent dataset is obtained by replacing a single input $(a_{t_0},s_{t_0})$ with $(a_{t_0}',s_{t_0}')$, then defining the $b_t'$ analogously for the neighboring dataset, it is easy to check that $b_t=b_t'$ for $t\neq t_0$. For some choice of privacy parameter $\rho_0$, we may thus apply~\cref{thm:dp_continual_counting} (appropriately rescaled by $1/\sqrt{m}$),
to obtain estimates $\hat y_t^i$ of $y_t^i$, such that 
$\max_{t\in [T]}|y_t^i-\hat y_t^i|=O(\frac{(\log T)^{1.5}}{\sqrt{\rho_0 m}})$
%$\max_{t\in [T]}|y_t^i-\hat y_t^i|=O((\log T)^{1.5} \frac{\sqrt{\log 1/\delta_0}}{\sqrt{m}\eps_0})$ 
with high probability in $T$ and such that outputting these estimates is $\rho_0$-zCDP. 

We are not quite done as we want to release estimates of $\sum_{s=1}^tb_t=\sum_{j=1}^n A_{ij}x_t[j]$ for each fixed $i\in [m]$. For this, we use the analysis above with the privacy parameter $\rho_0$ chosen such that when employing advanced composition over all $m$ releases, the final algorithm is $\rho$-zCDP. By~\cref{lem:zcdp-composition}, it follows that we can put $\rho_0=\rho/m$ to achieve that over the $m$ releases, the final algorithm is $\rho$-zCDP. The maximum final error of a single estimate is thus 
%\[
%O\left((\log T)^{1.5} \frac{\sqrt{\log 1/\delta \log m/\delta}}{\eps}\right).
%\]
$O(\frac{(\log T)^{1.5}}{\sqrt{\rho}})$. Denote by  $\lambda$ this upper bound on the maximum error. % and plugging in $m= C_1(\log T)/\alpha_0^2$ gives the bound of 
%\[
%\lambda:=C_2\frac{(\log T)^{1.5}}{\eps}\log\left(\frac{\log T}{\alpha_0\delta}\right),
%\]
%for a sufficiently large constant $C_2$.

Now since the above algorithm for releasing all estimates $\hat y_t^i$ is $\rho$-DP, by post-processing, we may release any function of these estimates without violating privacy. Our final estimate of $\|x^{(t)}\|_2^2$ is simply $\sum_{i\in [m]}(\hat y_t^i)^2$. Below, we analyze the accuracy of this estimator.

\paragraph{Accuracy}
Using that $(\hat y_t^i)^2-(y_t^i)^2=(\hat y_t^i-y_t^i)(\hat y_t^i+y_t^i)$, and our bound $|\hat y_t^i-y_t^i|\leq \lambda$, we have that
\[
\left|\sum_{i\in [m]}(\hat y_t^i)^2-\sum_{i\in [m]}(y_t^i)^2\right|\leq \lambda \sum_{i\in [m]} |\hat y_t^i+y_t^i| 
\]
Let $S_1=\{i\in [m]:|y_t^i|\leq \lambda/\alpha_0\}$ and $S_2=[m]\setminus S_1$. Note that $|y_t^i|\leq \lambda/\alpha_0$ implies that $|y_t^i+\hat y_t^i|\leq \lambda(1+1/\alpha_0)\leq 2\lambda/\alpha_0$. Thus,
\[
\sum_{i\in S_1} |\hat y_t^i+y_t^i| \leq \frac{2|S_1|\lambda}{\alpha_0}.
\]
Moreover, $|y_t^i|> \lambda/\alpha_0$ implies that $|\hat y_t^i+y_t^i|\leq 2|y_t^i|+\lambda\leq 2\frac{\alpha_0}{\lambda}(y_t^i)^2+\lambda$, and so,
\[
\sum_{i\in S_2} |\hat y_t^i+y_t^i| \leq \frac{2\alpha_0}{\lambda}\|y_t\|_2^2+\lambda|S_2|.
\]
Combining these bounds, we get that 
\[
\left|\sum_{i\in [m]}(\hat y_t^i)^2-\sum_{i\in [m]}(y_t^i)^2\right|\leq 2\alpha_0\|y_t\|_2^2+\frac{2\lambda^2m}{\alpha_0}.
\]
It thus follows from the high probability guarantee of the JL map and the triangle inequality that, 
\[
\left|\sum_{i\in [m]}(\hat y_t^i)^2-\|x^{(t)}\|_2^2\right|\leq 2\alpha_0\|y_t\|_2^2+\frac{2\lambda^2m}{\alpha_0}+\alpha_0\|x^{(t)}\|_2^2\leq \left(2\alpha_0(1+\alpha_0)+\alpha_0\right)\|x^{(t)}\|_2^2+\frac{2\lambda^2m}{\alpha_0}.
\]
Picking $\alpha_0=\alpha/5$, the above bound is at most $\alpha \|x^{(t)}\|_2^2+\frac{10\lambda^2m}{\alpha}$. Plugging in the values of $m$ and $\lambda$ gives the desired result.

\paragraph{Space Usage} It is well known that the dimensionality reduction matrix $A$ satisfies the guarantee of~\cref{lemma:JL} even if the entries are only $O(\log T)$ independent~\cite{clarkson2009numerical}. To populate the entries of $A$, we can for example use $k$-independent hashing~\cite{wegman1981new} which when implemented as polynomials of degree $k=O(\log T)$, requires $O(\log T)$ words of memory. We additionally, need to store the $m$ counters for continual counting. By~\cref{thm:dp_continual_counting}, this can be achieved with $O(m\log T)=O((\log T)^2/\alpha^2)$ words of memory.
\end{proof}

\section*{Acknowledgments}
The authors thank anonymous ICLR 2026 reviewers for helpful feedback. The authors thank Xin Lyu and A.~R.~Sricharan for discussions on their concurrent works.

Anders Aamand was supported by the VILLUM Foundation grant 54451.
Justin Y.\ Chen was supported by the NSF TRIPODS program (award DMS-2022448) and NSF
Graduate Research Fellowship under Grant No.\ 1745302.

\bibliographystyle{alpha}
\bibliography{bib}

\end{document}